\newtheorem{definition}{Definition}
\newtheorem{lemma}[definition]{Lemma}
\newtheorem{proposition}[definition]{Proposition}
\newtheorem{theorem}[definition]{Theorem}
\DeclareMathOperator{\merge}{merge}
\DeclareMathOperator{\sg}{Sig}
\DeclareMathOperator{\rk}{rk}
\DeclareMathOperator{\height}{ht}
\DeclareMathOperator{\pos}{pos}
\DeclareMathOperator{\supp}{supp}
\DeclareMathOperator{\wt}{wt}
\providecommand*{\suc}[0]{\ensuremath{\mathrm{succ}}}
\providecommand*{\abs}[1]{\ensuremath{\lvert #1 \rvert}}
\providecommand*{\seq}[3]{\ensuremath{#1_{#2}, \dotsc, #1_{#3}}}
\providecommand*{\symdiff}[0]{\ensuremath{\mathop{\ominus}}}
\providecommand*{\nat}[0]{\ensuremath{\mathbb N}}
\providecommand*{\semf}[1]{\ensuremath{\llbracket {#1} \rrbracket}}
\providecommand*{\sem}[1]{\ensuremath{\mathop{\semf{#1}}}}
\providecommand*{\bigo}{\mathcal{O}}
\providecommand*{\dta}{\textsc{dta}}
\providecommand*{\wdta}{\textsc{dwta}}
\providecommand*{\tup}[1]{\langle #1 \rangle}
\providecommand*{\sr}[1]{\mathds{#1}}
\providecommand*{\cokernel}{{\overline{K}}}
\providecommand*{\copreamble}{{\overline{P}}}
\providecommand*{\srplus}{+}
\providecommand*{\srtimes}{\cdot}
\providecommand*{\SBox}[0]{\ensuremath{\mathchoice%
    {{\scriptstyle \Box}}%
    {{\scriptstyle \Box}}%
    {{\scriptscriptstyle \Box}}%
    {{\scriptscriptstyle \Box}}%
}}
\title{Hyper-Minimization for \\ Deterministic Weighted Tree Automata}
\author{Andreas Maletti\footnotemark[1] 
  \institute{Universit\"at Leipzig, Institute of Computer Science \\
  Augustusplatz~10--11, 04109 Leipzig, Germany}
  \email{\texttt{\upshape maletti@informatik.uni-leipzig.de}}
  \and
  Daniel Quernheim\footnotemark[1]
  \institute{Universit\"at Stuttgart, Institute for Natural Language
    Processing \\ Pfaffenwaldring~5b, 70569 Stuttgart, Germany}
  \email{\texttt{\upshape daniel@ims.uni-stuttgart.de}}}
\begin{document}
\renewcommand{\thefootnote}{\fnsymbol{footnote}}
\footnotetext[1]{Both authors were financially supported by the German
  Research Foundation~(DFG) grant MA\,/\,4959\,/\,1--1.}
\renewcommand{\thefootnote}{\arabic{footnote}}

\maketitle

\begin{abstract}
  Hyper-minimization is a state reduction technique that allows a
  finite change in the semantics.  The theory for hyper-minimization
  of deterministic weighted tree automata is provided.  The presence
  of weights slightly complicates the situation in comparison to the
  unweighted case.  In addition, the first hyper-minimization
  algorithm for deterministic weighted tree automata, weighted over
  commutative semifields, is provided together with some
  implementation remarks that enable an efficient implementation.  In
  fact, the same run-time~$\mathcal O(m \log n)$ as in the unweighted
  case is obtained, where $m$~is the size of the deterministic
  weighted tree automaton and $n$~is its number of states.
\end{abstract}

\section{Introduction}
\label{sec:Intro}
Deterministic finite-state tree automata
(\dta)~\cite{gecste84,gecste97} are one of the oldest, simplest, but
most useful devices in computer science representing structure.  They
have wide-spread applications in linguistic analysis and
parsing~\cite{petbarthikle06} because they naturally can represent
derivation trees of a context-free grammar.  Due to the size of the
natural language lexicons and processes like state-splitting, we often
obtain huge \dta\ consisting of several million states.  Fortunately,
each \dta\ allows us to efficiently compute a unique (up to
isomorphism) equivalent minimal \dta, which is an operation that most
tree automata toolkits naturally implement.  The asymptotically most
efficient minimization algorithms are based on~\cite{koz92,hogmalmay08},
which in turn are based on the corresponding procedures for
deterministic string automata~\cite{hop71,gri73,valleh08}.  In
general, all those procedures compute the equivalent states and merge
them in time~$\bigo(m \log n)$, where $n$~is the number of states of
the input \dta\ and $m$~is its size.

Hyper-minimization~\cite{badgefshi07} is a state reduction technique
that can reduce beyond the classical minimal device because it allows
a finite change in the semantics (or a finite number of errors).  It
was already successfully applied to a variety of devices such as
deterministic finite-state automata~\cite{gawjez09,holmal10},
deterministic tree automata~\cite{jezmal12b} as well as deterministic
weighted automata~\cite{malque11b}.  With recent progress in the area
of minimization for weighted deterministic tree
automata~\cite{malque11c}, which provides the basis for this
contribution, we revisit hyper-minimization for weighted deterministic
tree automata.  The asymptotically fastest hyper-minimization
algorithms~\cite{gawjez09,holmal10} for \textsc{dfa} compute the
``almost-equivalence'' relation and merge states with finite left
language, called preamble states, according to it in time~$\bigo(m
\log n)$, where $m$~is the size of the input device and $n$~is the
number of its states.  Naturally, this complexity is the goal for our
investigation as well.  Variations such as cover automata
minimization~\cite{camsanyu01}, which has been explored before
hyper-minimization due to its usefulness in compressing finite
languages, or $k$-minimization~\cite{gawjez09} restrict the length of
the error strings instead of their number, but can also be achieved
within the stated time-bound.

As in~\cite{malque11b} our weight structures will be commutative
semifields, which are commutative semirings~\cite{hebwei98,gol99} with
multiplicative inverses.  As before, we will restrict our attention to
deterministic automata.  Actually, the mentioned applications of \dta\
often use the weighted version to compute a quantitative answer (i.e.,
the numerically best-scoring parse, etc).  We already know that
weighted deterministic tree automata~(\wdta)~\cite{berreu82,fulvog09}
over semifields can be efficiently minimized~\cite{malque11c},
although the minimal equivalent \wdta\ is no longer unique due to the
ability to ``push'' weights~\cite{moh97,eis03,malque11c}.  The
asymptotically fastest minimization algorithm~\cite{malque11c}
nevertheless still runs in time~$\bigo(m \log n)$.  To the authors'
knowledge, \cite{malque11c}~is currently the only published algorithm
achieving this complexity for \wdta.  Essentially, it normalizes the
input \wdta\ by ``pushing'' weights, which yields that, in the
process, the signatures of equivalent states become equivalent, so
that a classical unweighted minimization can then perform the
computation of the equivalence and the merges.  To this end, it is
important that the signature ignores states that can only recognize
finitely many contexts, which are called co-preamble states, to avoid
computing a wrong ``pushing'' weight.

We focus on an almost-equivalence notion that allows the recognized
weighted tree languages to differ (in weight) for finitely many trees.
Thus, we join the results on unweighted hyper-minimization for
\dta~\cite{jezmal12b} and weighted hyper-minimization for
\textsc{wdfa}~\cite{malque11b}.  Our algorithms (see
Algorithms~\ref{alg:Overall} and \ref{alg:Almost}) contain features of
both of their predecessors and are asymptotically as efficient as them
because they also run in time~$\bigo(m \log n)$.  As in~\cite{que10},
albeit in a slightly different format, we use standardized signatures
to avoid the explicit pushing of weights that was successful
in~\cite{malque11c}.  This adjustment allows us to mold our weighted
hyper-minimization algorithm into the structure of the unweighted
algorithm~\cite{holmal10}.

\section{Preliminaries}
\label{sec.prel}
We use~$\nat$ to denote the set of all nonnegative integers
(including~$0$).  For every integer $n \in \nat$, we use the set $[n]
= \{i \in \nat \mid 1 \leq i \leq n\}$.  Given two sets $S$~and~$T$,
their \emph{symmetric difference}~$S \symdiff T$ is given by $S
\symdiff T = (S - T) \cup (T - S)$.  An alphabet~$\Sigma$ is simply a
finite set of symbols, and a \emph{ranked alphabet}~$(\Sigma,
\mathord{\rk})$ consists of an alphabet~$\Sigma$ and a
ranking~$\mathord{\rk} \colon \Sigma \to \nat$.  We let $\Sigma_n = \{
\sigma \in \Sigma \mid \rk(\sigma) = n\}$ be the set of symbols of
rank~$n$ for every $n \in \nat$.  We often represent the ranked
alphabet~$(\Sigma, \mathord{\rk})$ by~$\Sigma$ alone and assume that
the ranking~`$\mathord{\rk}$' is implicit.  Given a set~$T$ and a
ranked alphabet~$\Sigma$, we let
\[ \Sigma(T) = \{ \sigma(\seq t1n) \mid n \in \nat, \sigma \in
\Sigma_n, \seq t1n \in T\} \enspace. \] The set~$T_\Sigma(Q)$ of
\emph{$\Sigma$-trees indexed by a set~$Q$} is the smallest set~$T$
such that $Q \cup \Sigma(T) \subseteq T$.  We write~$T_\Sigma$
for~$T_\Sigma(\emptyset)$.  Given a tree~$t \in T_\Sigma(Q)$, its
positions~$\pos(t) \subseteq \nat^*$ are inductively defined by
$\pos(q) = \{\varepsilon\}$ for each $q \in Q$ and $\pos(\sigma(\seq
t1n)) = \{\varepsilon\} \cup \{iw \mid i \in [n], w \in \pos(t_i) \}$
for all $n \in \nat$, $\sigma \in \Sigma_n$, and $\seq t1n \in
T_\Sigma(Q)$.  For each position~$w \in \pos(t)$, we write~$t(w)$ for
the label of~$t$ at position~$w$ and $t|_w$~for the subtree of~$t$
rooted at~$w$.  Formally,
\begin{align*}
  q(\varepsilon) &= q & \bigl( \sigma(\seq t1n) \bigr)(w) &=
  \begin{cases}
    \sigma & \text{if } w = \varepsilon \\
    t_i(v) & \text{if } w = iv \text{ with } i \in [n],\, v \in \nat^*
  \end{cases} \\
  q|_\varepsilon &= q & \sigma(\seq t1n)|_w &=
  \begin{cases}
    \sigma(\seq t1n) & \text{if } w = \varepsilon \\
    t_i|_v & \text{if } w = iv \text{ with } i \in [n],\, v \in \nat^*
  \end{cases}
\end{align*}
for all $q \in Q$, $n \in \nat$, $\sigma \in \Sigma_n$, and $\seq t1n
\in T_\Sigma(Q)$.  The height~$\height(t)$ of a tree~$t \in
T_\Sigma(Q)$ is simply $\height(t) = \max\ \{\abs w \mid w \in
\pos(t)\}$.

We reserve the use of the special symbol~$\SBox$ of rank~$0$.  A tree
$t \in T_{\Sigma \cup \{\SBox\}}(Q)$ is a \emph{$\Sigma$-context
  indexed by~$Q$} if the symbol~$\SBox$ occurs exactly once in~$t$.
The set of all $\Sigma$-contexts indexed by~$Q$ is denoted
by~$C_\Sigma(Q)$.  As before, we write~$C_\Sigma$
for~$C_\Sigma(\emptyset)$.  For each $c \in C_\Sigma(Q)$ and $t \in
T_\Sigma(Q)$, the substitution~$c[t]$ denotes the tree obtained
from~$c$ by replacing~$\SBox$ by~$t$.  Similarly, we use the
substitution~$c[c']$ with another context~$c' \in C_\Sigma(Q)$, in
which case we obtain yet another context.

We take all weights from a \emph{commutative semifield} $\langle \sr
S, \mathord\srplus, \mathord\srtimes, 0, 1\rangle$,\footnote{We
  generally require $0 \neq 1$, and in fact, the additive monoid is
  rather irrelevant for our purposes.} which is an algebraic structure
consisting of a commutative monoid~$\langle \sr S, \mathord{\srplus},
0\rangle$ and a commutative group $\langle \sr S - \{0\},
\mathord{\srtimes}, 1\rangle$ such that
\begin{compactitem}
\item $s \srtimes 0 = 0$ for all $s \in \sr S$, and
\item $s \srtimes (s_1 \srplus s_2) = (s \srtimes s_1) \srplus (s
  \srtimes s_2)$ for all $s, s_1, s_2 \in \sr S$.
\end{compactitem}
Roughly speaking, commutative semifields are commutative
semirings~\cite{hebwei98,gol99} with multiplicative inverses.  Many
practically relevant weight structures are commutative semifields.
Examples include
\begin{compactitem}
\item the real numbers $\langle \sr R, \mathord{+}, \mathord{\cdot},
  0, 1\rangle$,
\item the tropical semifield $\langle \sr R \cup \{\infty\},
  \mathord{\min}, \mathord{+}, \infty, 0\rangle$,
\item the probabilistic semifield $\langle [0, 1], \mathord{\max},
  \mathord{\cdot}, 0, 1\rangle$ with $[0, 1] = \{ r \in \sr R \mid 0
  \leq r \leq 1\}$, and
\item the \textsc{Boolean} semifield $\sr B = \langle \{0, 1\},
  \mathord{\max}, \mathord{\min}, 0, 1\rangle$.
\end{compactitem}
For the rest of the paper, let $\langle \sr S, \mathord\srplus,
\mathord\srtimes, 0, 1\rangle$ be a commutative semifield (with $0
\neq 1$), and let $\underline{\sr S} = \sr S - \{0\}$.  For every $s
\in \underline{\sr S}$ we write $s^{-1}$ for the inverse of~$s$; i.e.,
$s \srtimes s^{-1} = 1$.  For better readability, we will sometimes
write $\tfrac{s_1}{s_2}$ instead of $s_1 \srtimes s_2^{-1}$.  The
following notions implicitly use the commutative semifield~$\sr S$.  A
\emph{weighted tree language} is simply a mapping $\varphi \colon
T_\Sigma(Q) \to \sr S$.  Its \emph{support}~$\supp(\varphi) \subseteq
T_\Sigma(Q)$ is $\supp(\varphi) = \varphi^{-1}(\underline{\sr S})$;
i.e., the support contains exactly those trees that are evaluated to
non-zero by~$\varphi$.  Given $s \in \sr S$, we let $(s \cdot \varphi)
\colon T_\Sigma(Q) \to \sr S$ be the weighted tree language such that
$(s \srtimes \varphi)(t) = s \srtimes \varphi(t)$ for every $t \in
T_\Sigma(Q)$.

A deterministic weighted tree automaton
(\wdta)~\cite{berreu82,kui98,bor05a,fulvog09} is a tuple ${\cal A} =
(Q, \Sigma, \delta, \mathord{\wt}, F)$ with
\begin{compactitem}
\item a finite set~$Q$ of states,
\item a ranked alphabet~$\Sigma$ of input symbols such that $\Sigma
  \cap Q = \emptyset$,
\item a transition mapping $\delta \colon \Sigma(Q) \to Q$,\footnote{Note
  that our \wdta\ are always total.  We additionally disallow
  transition weight~$0$.  If a transition is undesired, then its
  transition target can be set to a sink state, which we commonly
  denote by~$\bot$.  Finally, the restriction to final states instead
  of final weights does not cause a difference in expressive power in
  our setting~\protect{\cite[Lemma~6.1.4]{bor05a}}.} 
\item a transition weight assignment $\mathord{\wt} \colon \Sigma(Q)
  \to \underline{\sr S}$, and
\item a set~$F \subseteq Q$ of final states.
\end{compactitem}
The transition and transition weight mappings
`$\delta$'~and~`$\mathord{\wt}$' naturally extend to mappings
$\hat\delta \colon T_\Sigma(Q) \to Q$ and $\hat{\mathord{\wt}} \colon
T_\Sigma(Q) \to \underline{\sr S}$ by
\begin{align*}
  \hat\delta(q) &= q & \hat\delta(\sigma(\seq t1n)) &=
  \delta(\sigma(\hat\delta(t_1), \dotsc, \hat\delta(t_n))) \\*
  \hat{\wt}(q) &= 1 & \hat{\wt}(\sigma(\seq t1n)) &=
  \wt(\sigma(\hat\delta(t_1), \dotsc, \hat\delta(t_n))) \srtimes
  \prod_{i \in [n]} \hat{\wt}(t_i)
\end{align*}
for every $q \in Q$, $n \in \nat$, $\sigma \in \Sigma_n$, and $\seq
t1n \in T_\Sigma(Q)$.  Since $\hat{\delta}(t) = \delta(t)$ and
$\hat{\wt}(t) = \wt(t)$ for all $t \in \Sigma(Q)$, we can safely omit
the hat and simply write $\delta$~and~$\wt$ for
$\hat\delta$~and~$\hat{\wt}$, respectively.  The \wdta~${\cal A}$
recognizes the weighted tree language $\mathord{\sem{\mathcal A}}
\colon T_\Sigma \to \sr S$ such that
\[ \sem{\mathcal A}(t) =
\begin{cases}
  \wt(t) & \text{if } \delta(t) \in F \\
  0 & \text{otherwise}
\end{cases} \] for all $t \in T_\Sigma$.  Two \wdta\ $\mathcal
A$~and~$\mathcal B$ are equivalent if $\sem{\mathcal A} =
\sem{\mathcal B}$; i.e., their recognized weighted tree languages
coincide.  A \wdta\ over the \textsc{Boolean} semifield~$\sr B$ is
also called \dta~\cite{gecste84,gecste97} and written $(Q, \Sigma,
\delta, F)$ since the component~`$\wt$' is uniquely determined.
Moreover, we identify each \textsc{Boolean} weighted tree language
$\varphi \colon T_\Sigma(Q) \to \{0,1\}$ with its support.  Finally,
the set~$C_\delta$ of shallow transition contexts is
\[ C_\delta = \{ \sigma(\seq q1{i-1}, \SBox, \seq q{i+1}n) \mid n \in
\nat, i \in [n], \sigma \in \Sigma_n, \seq q1n \in Q \} \enspace, \]
which we assume to be totally ordered by some arbitrary order~$\leq$.

For minimization, the weighted (extended) context language of a state
is relevant.  For every $q \in Q$ the context-semantics~$\sem q_{{\cal
    A}} \colon C_\Sigma(Q) \to \sr S$ of~$q$ is defined for every $c
\in C_\Sigma(Q)$ by
\[ \semf q_{\mathcal A}(c) =
\begin{cases}
  \wt(c[q]) & \text{if } \delta(c[q]) \in F \\
  0 & \text{otherwise.}
\end{cases} \] Intuitively, $\sem q_{\mathcal A}$~is the weighted
(extended) language recognized by~${\cal A}$ starting in state~$q$.
Two states $q, q' \in Q$ are equivalent~\cite{bor03}, written $q
\equiv q'$, if there exists $s \in \underline{\sr S}$ such that $\semf
q_{{\cal A}}(c) = s \cdot \semf{q'}_{{\cal A}}(c)$ for all $c \in
C_\Sigma$.  An equivalence relation~$\mathord{\cong} \subseteq Q
\times Q$ is a congruence relation (for the \wdta~$\mathcal A$) if for
all $n \in \nat$, $\sigma \in \Sigma_n$, and $q_1 \cong q'_1, \dotsc,
q_n \cong q'_n$ we have $\delta(\sigma(\seq q1n)) \cong
\delta(\sigma(\seq{q'}1n))$.  It is known~\cite{bor03}
that~$\mathord{\equiv}$ is a congruence relation.  The \wdta~${\cal
  A}$ is minimal if there is no equivalent \wdta\ with strictly fewer
states.  We can compute a minimal \wdta\ efficiently using a variant
of \textsc{Hopcroft}'s algorithm~\cite{hop71,hogmalmay08} that
computes~$\mathord{\equiv}$ and runs in time~$\bigo(m \log n)$, where
$m = \abs{\Sigma(Q)}$~is the size of~$\mathcal A$ and $n = \abs Q$.

\section{A characterization of hyper-minimality}
\label{sec:Hyper}
Hyper-minimization~\cite{badgefshi07} is a form of lossy compression
that allows any finite number of errors.  It has been investigated
in~\cite{bad08,holmal10,gawjez09} for deterministic finite-state
automata and in~\cite{jezmal12b} for deterministic tree automata.
Finally, hyper-minimization was already generalized to weighted
deterministic finite-state automata in~\cite{malque11b}, from which we
borrow much of the general approach.  In the
following, let ${\cal A} = (Q, \Sigma, \delta, \mathord{\wt}, F)$ and
${\cal B} = (P, \Sigma, \mu, \mathord{\wt'}, G)$ be \wdta\ over the
commutative semifield $\langle \sr S, \mathord\srplus,
\mathord\srtimes, 0, 1 \rangle$ with $0 \neq 1$.

We start with the basic definition of when two weighted tree languages
are almost-equivalent.  We decided to use the same approach as
in~\cite{malque11b}, so we require that the weighted tree languages,
seen as functions, must coincide on almost all trees.  Note that this
restriction is not simply the same as requiring that the weighted tree
languages have almost-equal (i.e., finite-difference) supports.  It
fact, our definition yields that the supports are almost-equal, but
that is not sufficient.  In addition, we immediately allow a scaling
factor in many of our basic definitions since those are already
required in classical minimization~\cite{bor03} to obtain the most
general statements.  Naturally, a scaling factor is not allowed for
the almost-equivalence of \wdta\ since these are indeed supposed
assign a different weight to only finitely many trees.

\begin{definition}
  \label{df:AE}
  Two weighted tree languages $\varphi_1, \varphi_2 \colon T_\Sigma(Q)
  \to \sr S$ are \emph{almost-equivalent}, written $\varphi_1 \approx
  \varphi_2$, if there exists $s \in \underline{\sr S}$ such that
  $\varphi_1(t) = s \cdot \varphi_2(t)$ for almost all $t \in
  T_\Sigma(Q)$.\footnote{``Almost all'' means all but a finite number,
    as usual.}  We write $\varphi_1 \approx \varphi_2 \pod s$ to
  indicate the factor~$s$.  The \wdta\ ${\cal A}$~and~${\cal B}$ are
  almost-equivalent if $\sem{{\cal A}} \approx \sem{{\cal B}} \pod 1$.
  Finally, the states $q \in Q$ and $p \in P$ are almost-equivalent if
  there exists $s \in \underline{\sr S}$ such that $\semf q_{\mathcal
    A}(c) = s \srtimes \semf p_{\mathcal B}(c)$ for almost all $c \in
  C_\Sigma$.
\end{definition}

We start with some basic properties of~$\approx$, which is shown to be
an equivalence relation both on the weighted tree languages as well as
on the states of a single \wdta.  In addition, we demonstrate that the
latter version is even a congruence relation.  This shows that once we
are in almost-equivalent states, the same impetus causes the different
devices to switch to other almost-equivalent states.  

\begin{lemma}
  \label{lem.cong}
  Almost-equivalence is an equivalence relation such that
  $\delta(c[q]) \approx \mu(c[p])$ for all $c \in C_\Sigma$, $q \in
  Q$, and $p \in P$ with $q \approx p$.
\end{lemma}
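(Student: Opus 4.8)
The plan is to treat the two assertions separately: first that $\approx$ is an equivalence relation, and then the congruence-type property $\delta(c[q]) \approx \mu(c[p])$. For the equivalence relation I would verify the three properties directly from Definition~\ref{df:AE}, exploiting that $\underline{\sr S}$ is a multiplicative group and that a finite union of finite sets is finite. Reflexivity holds with scaling factor~$1$, which lies in~$\underline{\sr S}$ because $0 \neq 1$. For symmetry, if $\varphi_1 \approx \varphi_2 \pod s$, then multiplying $\varphi_1(t) = s \srtimes \varphi_2(t)$ by the inverse $s^{-1}$ yields $\varphi_2 \approx \varphi_1 \pod{s^{-1}}$ on the same cofinite set. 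For transitivity, if $\varphi_1 \approx \varphi_2 \pod s$ and $\varphi_2 \approx \varphi_3 \pod t$, then outside the (finite) union of the two exceptional sets we have $\varphi_1 = (s \srtimes t) \srtimes \varphi_3$, so $\varphi_1 \approx \varphi_3 \pod{s \srtimes t}$. Since state almost-equivalence is by definition just this relation applied to the context-semantics $\semf{\cdot}_{\mathcal A}, \semf{\cdot}_{\mathcal B} \colon C_\Sigma \to \sr S$ (only the domain changes from $T_\Sigma(Q)$ to $C_\Sigma$), the identical argument settles the equivalence-relation claim on states as well.

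The crux is the congruence part, for which I would isolate a \emph{composition identity}: for all $c, d \in C_\Sigma$ and $q \in Q$,
\[ \semf{q}_{\mathcal A}(d[c]) = \wt(c[q]) \srtimes \semf{\delta(c[q])}_{\mathcal A}(d) \enspace. \]
To prove it I would first note that $\delta$ factors through the box, i.e.\ $\delta(d[c[q]]) = \delta(d[\delta(c[q])])$, so the final-state condition on $d[c[q]]$ coincides with the one on $d[\delta(c[q])]$; hence either both sides vanish (using $s \srtimes 0 = 0$) or both are non-zero. In the non-zero case I would invoke the straightforward induction on~$d$ showing $\wt(d[t]) = W_d(\delta(t)) \srtimes \wt(t)$ for a context weight $W_d$ that depends only on~$d$ and on the state $\delta(t)$; applying this to $t = c[q]$ and to $t = \delta(c[q])$ (a state, for which $\wt$ equals~$1$) turns the common factor $W_d(\delta(c[q]))$ into the claimed identity. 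The same statement holds verbatim in~$\mathcal B$ with $\mu$ and $\wt'$ in place of $\delta$ and $\wt$.

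Finally, given $q \approx p \pod s$ with finite exceptional set $E \subseteq C_\Sigma$, I would rearrange the composition identity (legitimate because transition weights lie in the group~$\underline{\sr S}$ and are thus invertible): for every~$d$ with $d[c] \notin E$,
\begin{align*}
  \semf{\delta(c[q])}_{\mathcal A}(d)
    &= \wt(c[q])^{-1} \srtimes \semf{q}_{\mathcal A}(d[c])
     = \wt(c[q])^{-1} \srtimes s \srtimes \semf{p}_{\mathcal B}(d[c]) \\
    &= \wt(c[q])^{-1} \srtimes s \srtimes \wt'(c[p]) \srtimes \semf{\mu(c[p])}_{\mathcal B}(d) \enspace,
\end{align*}
so $\delta(c[q]) \approx \mu(c[p]) \pod{s'}$ with $s' = s \srtimes \wt'(c[p]) \srtimes \wt(c[q])^{-1} \in \underline{\sr S}$. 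The main obstacle, and the reason the argument is not purely formal, is exactly the last point: I must transfer ``almost all~$e$'' from the composite contexts $e = d[c]$ back to ``almost all~$d$''. Here I would use that $c$ is fixed and that the map $d \mapsto d[c]$ is injective — from $d[c]$ one recovers the box position of~$d$ by deleting the fixed suffix contributed by~$c$, and then recovers~$d$ by replacing the occurrence of~$c$ with~$\SBox$. Consequently $\{ d \in C_\Sigma \mid d[c] \in E \}$ has at most $\abs E$ elements and is finite, so the displayed equality indeed holds for almost all~$d$, which completes the proof.
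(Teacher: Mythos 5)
Your proof is correct and takes essentially the same route as the paper's: the same equivalence-relation argument (factor~$1$ for reflexivity, inverses for symmetry, union of the finite exceptional sets for transitivity) and the same composition identity $\semf q_{\mathcal A}(c'[c]) = \wt(c[q]) \srtimes \semf{\delta(c[q])}_{\mathcal A}(c')$, rearranged to exhibit the scaling factor $s \srtimes \wt'(c[p]) \srtimes \wt(c[q])^{-1}$. The only difference is that you make explicit two points the paper leaves implicit---the inductive proof of the composition identity (the paper cites the literature for it) and the injectivity of $d \mapsto d[c]$, which justifies transferring ``almost all'' from the composite contexts $d[c]$ back to the contexts~$d$.
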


\begin{proof}
  Trivially, $\approx$~is reflexive and symmetric (because we have
  multiplicative inverses for all elements of~$\underline{\sr S}$).
  Let $\varphi_1, \varphi_2, \varphi_3 \colon T_\Sigma(Q) \to \sr S$
  be weighted tree languages such that $\varphi_1 \approx \varphi_2
  \pod s$ and $\varphi_2 \approx \varphi_3 \pod{s'}$ for some $s, s'
  \in \underline{\sr S}$.  Then there exist finite sets $L, L'
  \subseteq T_\Sigma(Q)$ such that $\varphi_1(t) = s \srtimes
  \varphi_2(t)$ and $\varphi_2(t') = s' \srtimes \varphi_3(t')$ for
  all $t \in T_\Sigma(Q) - L$ and $t' \in T_\Sigma(Q) - L'$.
  Consequently, $\varphi_1(t'') = s \srtimes s' \srtimes
  \varphi_3(t'')$ for all $t'' \in T_\Sigma(Q) - (L \cup L')$, which
  proves $\varphi_1 \approx \varphi_3 \pod{s \srtimes s'}$ and thus
  transitivity.  Hence, $\approx$~is an equivalence relation.  The
  same arguments can be used for~$\approx$ on \wdta\footnote{Note that
    $1^{-1} = 1$ and $1 \cdot 1 = 1$, so the restriction to factor~$1$
    in the definition of the almost-equivalence of \wdta\ is not
    problematic.} and states.  For the second property, induction
  allows us to easily prove~\cite{bor05a} that
  \begin{align*}
    \semf q_{{\cal A}}(c'[c]) &= \wt(c[q]) \srtimes
    \semf{\delta(c[q])}_{{\cal A}}(c') \quad &\text{and}&& \quad
    \semf p_{{\cal B}}(c_2[c_1]) &= \wt'(c_1[p]) \srtimes
    \semf{\mu(c_1[p])}_{{\cal B}}(c_2)
    \tag{$\dagger$} 
  \end{align*}
  for all $c, c' \in C_\Sigma(Q)$ and $c_1,c_2 \in C_\Sigma(P)$.
  Since $q \approx p \pod s$, there exists a finite set~$C \subseteq
  C_\Sigma$ such that $\semf q_{{\cal A}}(c'') = s \srtimes \semf
  p_{{\cal B}}(c'')$ for all $c'' \in C_\Sigma - C$.  Consequently,
  \[ \semf{\delta(c[q])}_{{\cal A}}(c') = \frac{\semf q_{{\cal
        A}}(c'[c])} {\wt(c[q])} = s \srtimes \frac{\semf p_{{\cal
        B}}(c'[c])} {\wt(c[q])} = s \srtimes
  \frac{\wt'(c[p])}{\wt(c[q])} \srtimes \semf{\mu(c[p])}_{{\cal
      B}}(c') \] for all~$c' \in C_\Sigma$ such that $c'[c] \notin C$,
  which proves that $\delta(c[q]) \approx \mu(c[p])$.
\end{proof}

Next, we show that almost-equivalent states of the same \wdta\ even
coincide (up to the factor~$s$) on almost all extended contexts, which
are contexts in which states may occur.

\begin{lemma}
  \label{lem.extended}
  Let $\mathcal A$ be minimal and $q \approx q' \pod s$ for some $s \in
  \underline{\sr S}$ and $q, q' \in Q$.  Then
  \[ \sem q\nolimits_{\mathcal A}(c) = s \srtimes \sem
  {q'}\nolimits_{\mathcal A}(c) \tag{$\ddagger$} \]
  for almost all $c \in C_\Sigma(Q)$.
\end{lemma}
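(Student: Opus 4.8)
The plan is to reduce the statement about extended contexts in $C_\Sigma(Q)$ to the hypothesis about ordinary contexts in $C_\Sigma$ by replacing every state occurrence in an extended context with a ground tree that reaches that state. Since $\mathcal A$ is minimal, every state of~$Q$ is reachable (otherwise the reachable states, being closed under transitions, would form a proper total sub-\wdta\ of $\mathcal A$ with the same semantics, contradicting minimality), so for every $p \in Q$ I may fix a tree $t_p \in T_\Sigma$ with $\delta(t_p) = p$. For an extended context $c \in C_\Sigma(Q)$ I define its \emph{grounding} $\hat c \in C_\Sigma$ by simultaneously replacing every state leaf labelled $p$ (necessarily distinct from the unique box position, since $\SBox \notin Q$) by the tree $t_p$, and I let $W(c) \in \underline{\sr S}$ be the product of the weights $\wt(t_p)$ over all replaced leaves, counted with multiplicity. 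This factor is nonzero because $\wt$ takes values in $\underline{\sr S}$.

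First I would record the effect of grounding on the context semantics. By determinism, replacing a state leaf $p$ by $t_p$ leaves the state computed at that position unchanged, so $\delta(\hat c[r]) = \delta(c[r])$ for $r \in \{q, q'\}$ and the acceptance condition is unaffected; and by the multiplicativity of $\hat{\wt}$ each such replacement contributes the factor $\wt(t_p)$ in place of $\hat{\wt}(p) = 1$, while all transition weights above remain the same. Combining both observations, and noting that $W(c) \srtimes 0 = 0$ handles the rejecting case, yields
\[ \semf r_{\mathcal A}(\hat c) = W(c) \srtimes \semf r_{\mathcal A}(c) \qquad \text{for } r \in \{q, q'\}. \]

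Next I exploit the hypothesis $q \approx q' \pod s$: there is a finite set $C \subseteq C_\Sigma$ with $\semf q_{\mathcal A}(c'') = s \srtimes \semf{q'}_{\mathcal A}(c'')$ for all $c'' \in C_\Sigma - C$. For every extended context $c$ with $\hat c \notin C$, instantiating this equality at $c'' = \hat c$ and substituting the grounding identity gives $W(c) \srtimes \semf q_{\mathcal A}(c) = s \srtimes W(c) \srtimes \semf{q'}_{\mathcal A}(c)$; cancelling the invertible factor $W(c)$ then delivers the desired equation~$(\ddagger)$ for~$c$. Hence the set of extended contexts violating~$(\ddagger)$ is contained in the preimage $\{ c \in C_\Sigma(Q) \mid \hat c \in C \}$.

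It therefore remains to show that this preimage is finite, which I expect to be the main obstacle. The key observation is that grounding can only enlarge a context, as it replaces leaves by trees, so any $c$ with $\hat c \in C$ satisfies $\abs{\pos(c)} \leq \max_{c'' \in C} \abs{\pos(c'')}$. Since there are only finitely many trees over the finite ranked alphabet $\Sigma \cup \{\SBox\}$ indexed by the finite set~$Q$ whose size is bounded by this constant, the preimage of the finite set~$C$ is finite. Thus $(\ddagger)$ fails for only finitely many $c \in C_\Sigma(Q)$, which is exactly almost-equivalence on extended contexts.
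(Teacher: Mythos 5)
Your proof is correct and takes essentially the same approach as the paper's: both arguments ground the state leaves of an extended context with fixed representative trees $t_p$ (which exist by minimality), transfer the hypothesis from $C_\Sigma$ to $C_\Sigma(Q)$ via multiplicativity of $\wt$, and cancel the resulting invertible weight factor. The only cosmetic difference is the finiteness argument for the exceptional set --- the paper bounds contexts by height while you bound them by number of positions --- and both rest on the same observation that grounding cannot shrink a context.
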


\begin{proof}
  By definition of~$q \approx q' \pod s$, there exists a finite set~$C
  \subseteq C_\Sigma$ such that $\sem q_{\mathcal A}(c) = s \srtimes
  \sem{q'}_{\mathcal A}(c)$ for all $c \in C_\Sigma - C$.  Let $h \geq
  \max\ \{\height(c) \mid c \in C\}$ be an upper bound for the height
  of those finitely many contexts.  Clearly, there are only finitely
  many contexts of~$C_\Sigma$ that have height at most~$h$.  Now, let
  $c \in C_\Sigma(Q)$ be an extended context such that $\height(c) >
  h$, and let $W = \{ w \in \pos(c) \mid c(w) \in Q\}$ be the
  positions that are labeled with states.  For each state~$q \in Q$,
  select $t_q \in \delta^{-1}(q) \cap T_\Sigma$ a tree (without
  occurrences of states) that is processed in~$q$.  Clearly, such a
  tree exists for each state because $\mathcal A$~is minimal.  Let
  $c'$~be the context obtained from~$c$ by replacing each state
  occurrence of~$q$ by~$t_q$.  Obviously, $\height(c') \geq
  \height(c) > h$ because we replace only leaves.  Consequently, $c' \in
  C_\Sigma - C$.  Using a variant~\cite{bor05a} of~$(\dagger)$ we
  obtain
  \[ \semf q_{\mathcal A}(c) \cdot \prod_{w \in W} \wt(t_{c(w)}) =
  \semf q_{\mathcal A}(c') = s \srtimes \semf{q'}_{\mathcal A}(c') = s
  \srtimes \semf {q'}_{\mathcal A}(c) \cdot \prod_{w \in W}
  \wt(t_{c(w)}) \enspace, \] where the second equality is due to the
  fact that $c' \in C_\Sigma - C$.  Comparing the left-hand and
  right-hand side and cancelling the additional terms, which is
  allowed in a commutative semifield, we obtain $\semf q_{\mathcal
    A}(c) = s \srtimes \semf{q'}_{\mathcal A}(c)$ for all $c \in
  C_\Sigma(Q)$ with $\height(c) > h$, and thus for almost all $c \in
  C_\Sigma(Q)$ as required.
\end{proof}

As in all the other scenarios, the goal of hyper-minimization given
device~$\mathcal A$ is to construct an almost-equivalent device~${\cal
  B}$ such that no device is smaller than~${\cal B}$ and
almost-equivalent to~${\cal A}$.  In our setting, the devices are
\wdta\ over the ranked alphabet~$\Sigma$ and the commutative
semifield~$\sr S$.  Since almost-equivalence is an equivalence
relation by Lemma~\ref{lem.cong}, we can replace the requirement
``almost-equivalent to~${\cal A}$'' by ``almost-equivalent to~${\cal
  B}$'' and call a \wdta~${\cal B}$ \emph{hyper-minimal} if no
(strictly) smaller \wdta\ is almost-equivalent to it.  Then
hyper-minimization equates to the computation of a hyper-minimal
\wdta~${\cal B}$ that is almost-equivalent to~$\cal A$.  Let us first
investigate hyper-minimality, which was characterized
in~\cite{badgefshi07} for the \textsc{Boolean} semifield using the
additional notion of a \emph{preamble} state.

\begin{definition}[see \protect{\cite[Definition~2.11]{badgefshi07}}]
  \label{df:Kernel}
  A state~$q \in Q$ is a \emph{preamble state} if $\delta^{-1}(q) \cap
  T_\Sigma$ is finite.  Otherwise, it is a \emph{kernel state}.
\end{definition}

In other words, a state is a preamble state if and only if it accepts
finitely many trees (without occurrences of states).  This notion is
essentially unweighted, so the discussion in~\cite{jezmal12b} applies.
In particular, we can compute the set of kernel states in
time~$\mathcal O(m)$ with $m = \abs{\Sigma(Q)}$ being the size of the
\wdta~$\mathcal A$.

Recall that a \wdta\ (without unreachable states; i.e.,
$\delta^{-1}(q) \cap T_\Sigma \neq \emptyset$ for every $q \in Q$) is
minimal if and only if it does not have a pair of different, but
equivalent states~\cite{bra68,bor03}.  The ``only-if'' part of this
statement is shown by merging two equivalent states to obtain a
smaller, but equivalent \wdta.  Let us define a merge that
additionally applies a weight~$s$ to the rerouted transitions.

\begin{definition}
  \label{df:Merge}
  Let $q, q' \in Q$ and $s \in \underline{\sr S}$ with $q \neq q'$.
  The \emph{$s$-weighted merge} of~$q$ into~$q'$ is the\\
  \wdta~$\merge_{\cal A}(q \stackrel s\to q') = (Q - \{q\}, \Sigma,
  \delta', \mathord{\wt'}, F - \{q\})$ such that for all $t \in
  \Sigma(Q - \{q\})$
  \begin{align*}
    \delta'(t) &=
    \begin{cases}
      q' & \text{if } \delta(t) = q \\
      \delta(t) & \text{otherwise}
    \end{cases} &&& \wt'(t) &=
    \begin{cases}
      s \srtimes \wt(t) & \text{if } \delta(t) = q \\
      \wt(t) & \text{otherwise.}
    \end{cases}
  \end{align*}
\end{definition}

In our approach to weighted hyper-minimization, we also merge, but we
need to take care of the factors, so we use the weighted merges just
introduced.  The next lemma hints at the correct use of weighted
merges.

\begin{lemma}
  \label{lem.merge}
  Let $q, q' \in Q$ be different states, of which $q$~is a preamble
  state, and $s \in \underline{\sr S}$ be such that $q \approx q' \pod
  s$.  Then $\merge_{{\cal A}}(q \stackrel s\to q')$ is
  almost-equivalent to~${\cal A}$.
\end{lemma}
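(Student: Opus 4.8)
The plan is to show directly that $\mathcal B = \merge_{\mathcal A}(q \stackrel s\to q')$ and $\mathcal A$ assign the same weight to all but finitely many trees in~$T_\Sigma$, which is exactly $\sem{\mathcal A} \approx \sem{\mathcal B} \pod 1$. Write $\delta', \wt'$ for the data of~$\mathcal B$, and let $\pi \colon Q \to Q - \{q\}$ be the identity except for $\pi(q) = q'$; by construction $\mathcal B$ differs from~$\mathcal A$ only on transitions whose $\delta$-target is~$q$, which it reroutes to~$q'$ and scales by~$s$. Since $q$~is a preamble state, $D = \delta^{-1}(q) \cap T_\Sigma$ is finite, and since $q \approx q' \pod s$ there is a finite set $C \subseteq C_\Sigma$ with $\sem q_{\mathcal A}(c) = s \srtimes \sem{q'}_{\mathcal A}(c)$ for all $c \in C_\Sigma - C$. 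The goal is to confine every discrepancy to a tree assembled from these two finite sets.

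The decisive structural consequence of the preamble hypothesis is that $q$ lies on no non-trivial cycle: if some context took $q$ back to~$q$, iterating it over a fixed tree of~$D$ would produce infinitely many trees in $\delta^{-1}(q)$, contradicting finiteness of~$D$. Hence along every root path of a tree the state~$q$ is met at most once, so the positions $w$ with $\delta(t|_w) = q$ form an antichain and each such subtree $d = t|_w \in D$ has no proper subtree evaluated to~$q$. A routine bottom-up induction then gives two facts I would record first: on a \emph{clean} tree (no subtree evaluated to~$q$) the rerouting of~$\mathcal B$ never fires, so $\delta'(t) = \delta(t)$ and $\wt'(t) = \wt(t)$ and hence $\sem{\mathcal A}(t) = \sem{\mathcal B}(t)$; and because every occurrence $d \in D$ is clean below, $\mathcal B$ sends it to $\delta'(d) = \pi(\delta(d)) = q'$ with weight $\wt'(d) = s \srtimes \wt(d)$.

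The core computation treats a non-clean~$t$ by cutting it at an occurrence of~$q$: write $t = c[d]$ with $d = t|_w \in D$ and $c \in C_\Sigma$. The substitution identity behind~$(\dagger)$ gives $\sem{\mathcal A}(t) = \wt(d) \srtimes \sem q_{\mathcal A}(c)$, and the analogous identity in~$\mathcal B$ together with the second fact above gives $\sem{\mathcal B}(t) = \wt'(d) \srtimes \sem{q'}_{\mathcal B}(c) = s \srtimes \wt(d) \srtimes \sem{q'}_{\mathcal B}(c)$. As long as $c \notin C$ and $\sem{q'}_{\mathcal B}(c) = \sem{q'}_{\mathcal A}(c)$, substituting $\sem q_{\mathcal A}(c) = s \srtimes \sem{q'}_{\mathcal A}(c)$ makes the factor~$s$ introduced by the reroute cancel the factor~$s^{-1}$ coming from the almost-equivalence, leaving $\sem{\mathcal A}(t) = \sem{\mathcal B}(t)$. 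At this level the only exempted trees are the at most $\abs C \srtimes \abs D$ many of the form $c[d]$ with $c \in C$ and $d \in D$.

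The step I expect to be the main obstacle is justifying $\sem{q'}_{\mathcal B}(c) = \sem{q'}_{\mathcal A}(c)$, i.e.\ that after its first visit to~$q$ the merged automaton still tracks~$\mathcal A$ well enough: when $\mathcal B$ now maps some subtree of~$c$ to~$q'$ it can re-enter~$q$, and then $\mathcal B$ no longer agrees with~$\mathcal A$ state-for-state. Here the preamble property is again the lever. Exactly as above, a context $c'$ with $\delta(c'[q']) = q$ would, when filled by a ground tree reaching~$q'$, create infinitely many trees in $\delta^{-1}(q)$; hence only finitely many contexts route~$q'$ back toward~$q$, and no-cycle prevents them from being stacked. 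I would therefore finish by an induction on $\height(t)$ (equivalently on the size of the antichain of $q$-occurrences), peeling off one occurrence at a time: the no-cycle property keeps each peeled segment governed by $q \approx q'$ so that the accumulated factors telescope to~$1$, and it forces every genuine discrepancy to sit inside a tree built from the finite sets~$C$ and~$D$ and from the finitely many $q'$-to-$q$ contexts. Collecting the exceptions from all levels yields a finite set of trees on which $\mathcal A$ and~$\mathcal B$ may differ, which is precisely the assertion $\sem{\mathcal A} \approx \sem{\mathcal B} \pod 1$.
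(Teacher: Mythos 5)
Your outline follows the paper's proof quite closely in spirit (peel off the maximal $q$-subtrees and let the factor~$s$ introduced by the reroute cancel against the $s^{-1}$ coming from $q \approx q' \pod s$), and you correctly isolate the crux: after the first reroute one must compare the behaviour of~$\mathcal B$ from~$q'$ with that of~$\mathcal A$ from~$q'$. But your resolution of that crux does not work. The inference ``a context $c'$ with $\delta(c'[q']) = q$ would, when filled by a ground tree reaching $q'$, create infinitely many trees in $\delta^{-1}(q)$'' is a non sequitur: one context filled with one ground tree yields one tree; it yields infinitely many only if infinitely many ground trees reach $q'$, i.e.\ only if $q'$ is a \emph{kernel} state, which is not a hypothesis of the lemma ($q'$ may itself be preamble). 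Moreover, even granting that only finitely many contexts take $q'$ into $q$, this does not confine the damage: the merge reroutes exactly those transitions back into $q'$, so in~$\mathcal B$ each such context becomes a \emph{cycle} at~$q'$, which can be stacked arbitrarily often; your no-cycle property holds in~$\mathcal A$, not in~$\mathcal B$. A finite set of exceptional contexts does not translate into a finite set of exceptional trees, because those contexts occur at unboundedly many positions inside infinitely many trees, and your telescoping breaks at every such occurrence.

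The step genuinely fails, not merely lacks detail. Take the \textsc{Boolean} semifield, $\Sigma_0 = \{a\}$, $\Sigma_1 = \{\sigma, \tau\}$, $Q = \{q', q, k\}$, $F = \{k\}$, with $\delta(a) = q'$, $\delta(\sigma(q')) = q$, $\delta(\tau(q')) = k$, and all remaining transitions going to~$k$ (all weights~$1$). Then $q$ is a preamble state since $\delta^{-1}(q) \cap T_\Sigma = \{\sigma(a)\}$, and $\semf q_{\mathcal A}$, $\semf{q'}_{\mathcal A}$ differ only on the single context $\sigma(\SBox)$, so $q \approx q' \pod 1$; the automaton is even minimal. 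In $\mathcal B = \merge_{\mathcal A}(q \stackrel 1\to q')$ the rerouted transition is $\delta'(\sigma(q')) = q'$, hence $\delta'(\sigma^n(a)) = q'$ for every~$n$, so $\mathcal B$ rejects all trees $\sigma^n(a)$ while $\mathcal A$ accepts them for all $n \geq 2$: infinitely many differences. In particular the equality your induction needs, $\semf{q'}_{\mathcal B}(c) = \semf{q'}_{\mathcal A}(c)$, fails for every context $c = \sigma^{n}(\SBox)$ with $n \geq 2$. This example shows that the missing step cannot be repaired at all without an extra hypothesis such as ``no context takes $q'$ back into~$q$'' (automatic when $q'$ is a kernel state); note that the paper's own proof hides the very same difficulty in its unproved assertion that ``the state $q$ is not used when processing the context $c_m$'', and what actually keeps the hyper-minimization algorithm correct is that its merges respect this side condition: the target is a kernel state whenever the block contains one, and distinct states of an all-preamble block of a minimal \wdta\ cannot reach one another.
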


\begin{proof}
  Since $q \approx q' \pod s$, there exists a finite set~$C \subseteq
  C_\Sigma$ such that $\semf q_{\mathcal A}(c) = s \srtimes
  \semf{q'}_{\mathcal A}(c)$ for all $c \in C_\Sigma - C$.  Let $h
  \geq \max\ \{\height(c) \mid c \in C\}$~be an upper bound on the
  height of the contexts of~$C$.  Moreover, let $h' \geq \max\
  \{\height(t) \mid t \in \delta^{-1}(q) \cap T_\Sigma\}$~be an upper
  bound for the height of the trees of~$\delta^{-1}(q) \cap T_\Sigma$,
  which is a finite set since $q$~is a preamble state.  Finally, let
  $z > h + h'$.  Now we return to the main claim.  Let ${\cal B} =
  \merge_{\mathcal A}(q \stackrel s\to q')$ and consider an arbitrary
  tree~$t \in T_\Sigma$ whose height is at least~$z$.  Clearly,
  showing that $\mathcal B(t) = \mathcal A(t)$ for all trees~$t$ with
  $\height(t) \geq z$ proves that $\mathcal B$~and~$\mathcal A$ are
  almost-equivalent.\footnote{There are only finitely many ranked
    trees up to a certain height and recall that almost-equivalence
    does not permit a scaling factor for \wdta.}  Let $W = \{ w \in
  \pos(t) \mid \delta(t|_w) = q\}$ be the set of positions of the
  subtrees that are recognized in state~$q$.  Now $\wt'(t|_w) = s
  \srtimes \wt(t|_w)$ for all $w \in W$ because clearly the
  subtrees~$t|_w$ only use states different from~$q$ except at the
  root, where $\mathcal A$~switches to~$q$ and $\mathcal B$~switches
  to~$q'$ with the additional weight~$s$.  Note that $q$~cannot occur
  anywhere else inside those subtrees because this would create a loop
  which is impossible for a preamble state.  Let $W = \{\seq w1m\}$
  with $w_1 \sqsubset \dotsm \sqsubset w_m$, in which $\sqsubseteq$~is
  the lexicographic order on~$\nat^*$.  Let $c_1 \in C_\Sigma$ be the
  context obtained by removing the subtree at~$w_1$ from~$t$.  Note
  that $c_1$~is taller than~$h$ (i.e., $\height(c_1) > h$) and thus
  $c_1 \in C_\Sigma - C$ because the height of~$t$ is larger than~$h +
  h'$ and the height of~$t|_{w_1}$ is at most~$h'$.  Consequently,
  using a variant~\cite{bor05a} of~$(\dagger)$ we obtain
  \begin{align*}
    \mathcal A(t) &= \mathcal A(c_1[t|_{w_1}]) \stackrel{\dagger}=
    \wt(t|_{w_1}) \srtimes \semf q_{\mathcal A}(c_1) =
    \frac{\wt'(t|_{w_1})}{s} \srtimes s \srtimes \semf{q'}_{\mathcal
      A}(c_1) = \wt'(t|_{w_1})
    \srtimes \semf{q'}_{\mathcal A}(c_1) \\
    &= \wt'(t|_{w_1}) \srtimes
    \begin{cases}
      \wt(c_1[q']) & \text{if } \delta(c_1[q']) \in F \\
      0 & \text{otherwise.}
    \end{cases} \\
    \intertext{Let $c_2$ be the context obtained from~$c_1[q']$ by
      replacing the subtree at~$w_2$ by~$\SBox$.  Also $c_2 \notin C$.}
    &= \wt'(t|_{w_1}) \srtimes
    \begin{cases}
      \wt(c_2[t|_{w_2}]) & \text{if } \delta(c_2[t|_{w_2}]) \in F \\
      0 & \text{otherwise.}
    \end{cases} \quad = \wt'(t|_{w_1}) \srtimes \wt(t|_{w_2})
    \srtimes \semf q_{\mathcal A}(c_2) \\
    &\stackrel{\ddagger}= \wt'(t|_{w_1}) \srtimes
    \frac{\wt'(t|_{w_2})}{s} \srtimes s \srtimes \semf {q'}_{\mathcal
      A}(c_2) = \wt'(t|_{w_1}) \srtimes \wt'(t|_{w_2}) \srtimes \semf
    {q'}_{\mathcal A}(c_2) \enspace, \\
    \intertext{which can now be iterated to obtain} &= \wt'(t|_{w_1})
    \srtimes \ldots \srtimes \wt'(t|_{w_m}) \srtimes \semf
    {q'}_{\mathcal A}(c_m) = \wt'(t|_{w_1}) \srtimes \ldots \srtimes
    \wt'(t|_{w_m}) \srtimes \semf {q'}_{\mathcal B}(c_m)
    \stackrel{\dagger}= \mathcal B(t) \enspace,
  \end{align*}
  where the second-to-last step is justified because the state~$q$ is
  not used when processing the context~$c_m$.  This proves the
  statement.
\end{proof}

\begin{theorem}
  \label{thm:HMChar}
  A minimal \wdta\ is hyper-minimal if and only if it has no pair of
  different, but almost-equivalent states, of which at least one is a
  preamble state.
\end{theorem}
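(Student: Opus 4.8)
The plan is to prove the two implications separately, obtaining the ``only if'' direction as an immediate consequence of Lemma~\ref{lem.merge} and concentrating the real work on the ``if'' direction.

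For the ``only if'' direction I would argue contrapositively: suppose the minimal \wdta~$\mathcal A$ has distinct states $q \neq q'$ with $q \approx q' \pod s$ for some $s \in \underline{\sr S}$, at least one of which is a preamble state. If $q$~is the preamble state, then Lemma~\ref{lem.merge} shows that $\merge_{\mathcal A}(q \stackrel s\to q')$ is almost-equivalent to~$\mathcal A$ but has one state fewer, so $\mathcal A$~is not hyper-minimal. If instead only~$q'$ is a preamble state, then since $\approx$~is symmetric (Lemma~\ref{lem.cong}) we have $q' \approx q \pod{s^{-1}}$, and merging~$q'$ into~$q$ via $\merge_{\mathcal A}(q' \stackrel{s^{-1}}\to q)$ again yields a smaller almost-equivalent \wdta. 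Hence hyper-minimality forbids such a pair.

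For the ``if'' direction I assume $\mathcal A$ has no such pair and let $\mathcal B = (P, \Sigma, \mu, \wt', G)$ be any almost-equivalent \wdta; I must show $\abs P \geq \abs Q$. Since minimizing~$\mathcal B$ only removes states and preserves equivalence, hence almost-equivalence, I may assume~$\mathcal B$ is minimal, so all its states are reachable and pairwise inequivalent. As $\sem{\mathcal A} \approx \sem{\mathcal B} \pod 1$, I fix a height bound~$h^\ast$ exceeding the height of every tree on which $\mathcal A$~and~$\mathcal B$ disagree. The central tool is a \emph{collision lemma}: for distinct $q, q' \in Q$ and closed trees $t \in \delta^{-1}(q) \cap T_\Sigma$ and $t' \in \delta^{-1}(q') \cap T_\Sigma$ with $\mu(t) = \mu(t')$, one has $q \approx q'$ in~$\mathcal A$. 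Its proof uses that $\height(c[t]) > h^\ast$ for almost all $c \in C_\Sigma$, whence $\mathcal A(c[t]) = \mathcal B(c[t])$ there; decomposing both sides with a variant of~$(\dagger)$ gives $\wt(t) \cdot \semf q_{\mathcal A}(c) = \wt'(t) \cdot \semf{\mu(t)}_{\mathcal B}(c)$ and the analogous equation for~$t'$, from which eliminating $\semf{\mu(t)}_{\mathcal B}(c) = \semf{\mu(t')}_{\mathcal B}(c)$ yields $\semf q_{\mathcal A}(c) = \lambda \cdot \semf{q'}_{\mathcal A}(c)$ for almost all~$c$ and a fixed factor~$\lambda$.

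Now I would assign to each state of~$\mathcal A$ a state of~$\mathcal B$ and invoke the pigeonhole principle. To each preamble state~$q$ I assign $\mu(t_q)$ for an arbitrary reachability tree~$t_q$; to each kernel state~$q$, which is reached by infinitely many closed trees, I assign (pigeonhole over the finitely many states of~$\mathcal B$) a state $p_q \in P$ reached by infinitely many of them, so that arbitrarily tall such witnesses exist. If $\abs P < \abs Q$, this assignment is not injective, so some distinct $q, q'$ collide; by the collision lemma $q \approx q'$ in~$\mathcal A$. If at least one of them is a preamble state, this contradicts the assumption and we are done. The hard part is to rule out the remaining case that $q$~and~$q'$ are \emph{both} kernel states, where I would upgrade the almost-equivalence to genuine equivalence: because $p_q = p_{q'}$ is reached from both~$q$ and~$q'$ by trees of unbounded height, for \emph{every} context~$c_0 \in C_\Sigma$ I can choose witnesses~$t, t'$ tall enough that $\height(c_0[t]), \height(c_0[t']) > h^\ast$, so the collision computation holds at~$c_0$ with the same factor~$\lambda$, giving $\semf q_{\mathcal A}(c_0) = \lambda \cdot \semf{q'}_{\mathcal A}(c_0)$. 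As~$c_0$ was arbitrary, $q \equiv q'$, contradicting minimality of~$\mathcal A$. Thus no collision occurs among kernel states, every collision involves a preamble state, and the forbidden pair it produces establishes $\abs P \geq \abs Q$, i.e.\ hyper-minimality. I expect this upgrade step -- exploiting the unbounded height of the trees reaching a kernel state to turn a finite-difference statement into an exact one, while correctly handling the zero weights through the support so that~$\lambda$ is well defined -- to be the main obstacle.
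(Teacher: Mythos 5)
Your proposal is correct and, at its core, it is the same proof as the paper's. The ``only if'' direction is identical (Lemma~\ref{lem.merge} plus symmetry of~$\approx$), and your ``if'' direction has the paper's exact skeleton: a pigeonhole argument produces distinct $q, q'$ reached by trees $t, t'$ with $\mu(t) = \mu(t')$; decomposing $\sem{\mathcal A}(c[t]) = \sem{\mathcal B}(c[t])$ with the tree variant of~$(\dagger)$ shows $q \approx q'$; the hypothesis then forces both states to be kernel states; and for kernel states the almost-equivalence is upgraded to genuine equivalence using tall witnesses, contradicting minimality of~$\mathcal A$. Two of your packaging choices are actually tidier than the paper's: you prove your ``collision lemma'' directly from the almost-equivalence of the two \wdta\ rather than routing through Lemma~\ref{lem.cong} and transitivity, and you obtain arbitrarily tall witnesses for a kernel state by an infinite pigeonhole over~$P$ (infinitely many trees reach a kernel state, so some $p_q \in P$ receives infinitely many of them), where the paper instead invokes the pumping argument of~\cite{badgefshi07} and kernel states of the \textsc{Hadamard} product. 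The one loose end is the quantifier order in your upgrade step: choosing witnesses $t, t'$ \emph{per context}~$c_0$ makes the factor $\lambda = \frac{\wt'(t) \srtimes \wt(t')}{\wt(t) \srtimes \wt'(t')}$ depend a priori on~$c_0$, so ``with the same factor~$\lambda$'' is precisely the claim left unjustified --- as you yourself flag. The repair is one line: fix \emph{one} pair $t, t'$ with $\height(t), \height(t') > h^\ast$; since $\height(c_0[t]) \geq \height(t)$ for every context, both $(\dagger)$-equations then hold for \emph{all} $c_0 \in C_\Sigma$ with this single pair, hence $\semf q_{\mathcal A}(c_0) = \lambda \srtimes \semf{q'}_{\mathcal A}(c_0)$ everywhere with one fixed~$\lambda$ (zero values cause no trouble, since $0 = \lambda \srtimes 0$), giving $q \equiv q'$ and the contradiction with minimality. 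This uniform choice of tall witnesses is exactly how the paper closes the argument (``we can select $t_1$~and~$t_2$ such that the previous statements are actually true for all $c \in C_\Sigma$'').
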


\begin{proof}
  Let ${\cal A}$ be the minimal \wdta.  For the ``only if'' part, we
  know by Lemma~\ref{lem.merge} that the smaller \wdta\
  $\merge_{\mathcal A}(q \stackrel s\to q')$ is almost-equivalent
  to~${\cal A}$ if $q \approx q' \pod s$ and $q$~is a preamble state.
  For the ``if'' direction, suppose that ${\cal B}$ is
  almost-equivalent to~${\cal A}$ and $\abs P < \abs
  Q$.\footnote{Recall that almost-equivalent \wdta\ do not permit a
    scaling factor; their semantics need to coincide for almost all
    trees.}  For all $t \in T_\Sigma$ we have $\delta(t) \approx
  \mu(t)$ by Lemma~\ref{lem.cong}.  Since $\abs P < \abs Q$, there
  exist $t_1, t_2 \in T_\Sigma$ with $q_1 = \delta(t_1) \neq
  \delta(t_2) = q_2$ but $\mu(t_1) = p = \mu(t_2)$.  Consequently,
  $q_1 = \delta(t_1) \approx \mu(t_1) = p = \mu(t_2) \approx
  \delta(t_2) = q_2$, which yields $q_1 \approx q_2$.  By assumption,
  $q_1$~and~$q_2$ are kernel states.  Using a variation of the above
  argument (see~\cite[Theorem~3.3]{badgefshi07}) we can obtain
  $t_1$~and~$t_2$ with the above properties such that $\height(t_1),
  \height(t_2) \geq \abs Q^2$.  Due to their heights, we can pump the
  trees $t_1$~and~$t_2$, which yields that the states $\langle q_1,
  p\rangle$~and~$\langle q_2, p\rangle$ are kernel states of the
  \textsc{Hadamard} product $\mathcal A \srtimes \mathcal B$.  Since
  ${\cal A}$~and~${\cal B}$ are almost-equivalent, we have
  \begin{alignat*}{3}
    \wt(t_1) \srtimes \semf{q_1}_{{\cal A}}(c) &\stackrel{\dagger}=
    \semf{{\cal A}}(c[t_1]) &&= \semf{{\cal B}}(c[t_1])
    &&\stackrel{\dagger}= \wt'(t_1) \srtimes \semf p_{{\cal B}}(c) \\
    \wt(t_2) \srtimes \semf{q_2}_{{\cal A}}(c) &\stackrel{\dagger}=
    \semf{{\cal A}}(c[t_2]) &&= \semf{{\cal B}}(c[t_2])
    &&\stackrel{\dagger}= \wt'(t_2) \srtimes \sem p\nolimits_{{\cal
        B}}(c)
  \end{alignat*}
  for almost all $c \in C_\Sigma$ using again the tree variant
  of~($\dagger$).  Moreover, since both $\langle q_1,
  p\rangle$~and~$\langle q_2, p\rangle$ are kernel states, we can
  select $t_1$~and~$t_2$ such that the previous statements are
  actually true for all $c \in C_\Sigma$.  Consequently,
  \begin{align*}
    \frac{\wt(t_1) \srtimes \semf{q_1}_{{\cal A}}(c)} {\wt'(t_1)} =
    \frac{\wt(t_2) \srtimes \semf{q_2}_{{\cal A}}(c)} {\wt'(t_2)}
    \quad \text{and} \quad \semf{q_1}_{{\cal A}}(c) = s \srtimes
    \semf{q_2}_{{\cal A}}(c)
  \end{align*}
  for all $c \in C_\Sigma$ and $s = \frac{\wt'(t_1) \srtimes \wt(t_2)}
  {\wt'(t_2) \srtimes \wt(t_1)}$, which yields $q_1 \equiv q_2$.  This
  contradicts minimality since $q_1 \neq q_2$, which shows that such a
  \wdta~${\cal B}$ cannot exist.
\end{proof}

\section{Hyper-minimization}
\label{sec:HM}
Next, we consider some algorithmic aspects of hyper-minimization for
\wdta.  Since the unweighted case is already well-described in the
literature~\cite{jezmal12b}, we focus on the weighted case, for which
we need the additional notion of co-preamble states~\cite{malque11b},
which in analogy to~\cite{malque11b} are those states with finite
support of their weighted context language.  Let $P$~and~$K$ be the
sets of preamble and kernel states of~${\cal A}$, respectively.

\begin{definition}
  \label{def.cokernel}
  A state~$q \in Q$ is a \emph{co-preamble state} if $\supp(\sem
  q_{{\cal A}})$~is finite.  Otherwise it is a \emph{co-kernel
    state}.  The sets of all co-preamble states and all co-kernel
  states are $\copreamble$~and~$\cokernel = Q - \copreamble$,
  respectively.
\end{definition}

Transitions entering a co-preamble state can be ignored while checking
almost-equivalence because (up to a finite number of weight
differences) the reached states behave like the sink state~$\bot$.
Trivially, all co-preamble states are almost-equivalent.  In addition,
a co-preamble state cannot be almost-equivalent to a co-kernel state.
The interesting part of the almost-equivalence is thus completely
determined by the weighted languages of the co-kernel states.  This
special role of the co-preamble states has already been pointed out
in~\cite{gawjez09} in the context of \textsc{dfa}.

All hyper-minimization
algorithms~\cite{badgefshi07,bad09,gawjez09,holmal10} share the same
overall structure (Algorithm~\ref{alg:Overall}).  In the final step we
perform state merges (see Definition~\ref{df:Merge}).  Merging only
preamble states into almost-equivalent states makes sure that the
resulting \wdta\ is almost-equivalent to the input \wdta\ by
Lemma~\ref{lem.merge}.  Algorithm~\ref{alg:Overall} first minimizes
the input \wdta\ using, for example, the algorithm
of~\cite{malque11c}.  With the help of a weight redistribution along
the transitions (pushing), it reduces the problem to \dta\
minimization, for which we can use a variant of \textsc{Hopcroft}'s
algorithm~\cite{hogmalmay08}.  In the next step, we compute the
set~$K$ of kernel states of~${\cal A}$~\cite{jezmal12b} using any
algorithm that computes strongly connected components (for example,
\textsc{Tarjan}'s algorithm~\cite{tar72}).  By~\cite{jezmal12b} a
state is a kernel state if and only if it is reachable from (i)~a
nontrivial strongly connected component or (ii)~a state with a
self-loop.  Essentially, the same approach can be used to compute the
co-kernel states.  In line~\ref{ln.ae} we compute the
almost-equivalence on the states~$Q$, which is the part where the
algorithms~\cite{badgefshi07,bad09,gawjez09,holmal10} differ.
Finally, we merge almost-equivalent states according to
Lemma~\ref{lem.merge} until the obtained \wdta\ is hyper-minimal (see
Theorem~\ref{thm:HMChar}).

\renewcommand{\algorithmicensure}{\textbf{Return:}}
\begin{algorithm}[t]
  \begin{algorithmic}[2]
    \REQUIRE a \wdta~${\cal A}$ with $n$~states
    \ENSURE an almost-equivalent hyper-minimal \wdta
    \smallskip \hrule \smallskip
    \STATE ${\cal A} \gets \textsc{Minimize}({\cal A})$
    \label{ln.o1}
      \COMMENT{$\bigo(m \log n)$} 
    \STATE $K \gets \textsc{ComputeKernel}({\cal A})$
      \COMMENT{$\bigo(m)$}
    \STATE $\overline K \gets \textsc{ComputeCoKernel}({\cal A})$
      \COMMENT{$\bigo(m)$}
    \STATE $(\mathord\sim, t) \gets
    \textsc{ComputeAlmostEquivalence}({\cal A}, \overline K)$
    \label{ln.ae}
      \COMMENT{Algorithm~\protect{\ref{alg:Almost}} --- $\bigo(m \log n)$}
    \RETURN $\textsc{MergeStates}({\cal A}, K, \mathord\sim, t)$
      \COMMENT{Algorithm~\protect{\ref{alg.merge}} --- $\bigo(m)$}
  \end{algorithmic}
  \caption{Structure of the hyper-minimization algorithm.} 
  \label{alg:Overall}
\end{algorithm}

\begin{lemma}
  \label{lm.AE}
  Let ${\cal A}$ be a minimal \wdta.  The states $q, q' \in Q$ are
  almost-equivalent if and only if there is $n \in \nat$ such that
  $\delta(c[q]) = \delta(c[q'])$ for all $c \in C_\Sigma$ such that
  $\SBox$~occurs at position~$w$ in~$c$ with $\abs w \geq n$.
\end{lemma}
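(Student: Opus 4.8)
The plan is to prove both implications by reformulating the right-hand condition in terms of how state pairs evolve under increasingly deep contexts. First I would set up the auxiliary \emph{pair automaton} on~$Q \times Q$ whose transitions read shallow transition contexts~$\gamma \in C_\delta$ and send $(p,p')$ to $(\delta(\gamma[p]), \delta(\gamma[p']))$; this is just the reachability structure of the \textsc{Hadamard} product $\mathcal A \srtimes \mathcal A$ restricted to contexts. Reading a context~$c$ with $\SBox$ at position~$w$ then corresponds to a path of length~$\abs w$ starting at $(q,q')$, and $\delta(c[q]) = \delta(c[q'])$ means that this path ends on the diagonal $\{(p,p) \mid p \in Q\}$. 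The elementary but crucial observation is that the diagonal is absorbing: once a path reaches $(p,p)$ every further step stays diagonal, since both components read the same~$\gamma$ from the same state. Hence the right-hand condition is equivalent to saying that \emph{no off-diagonal pair reachable from $(q,q')$ lies on a cycle} of the pair automaton.

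For the \emph{only if} direction I would assume $q \approx q' \pod s$ and suppose, for contradiction, that the condition fails. Then off-diagonal pairs are reachable from $(q,q')$ at arbitrarily large depth, so a pigeonhole argument produces an off-diagonal pair $(p,p')$ with $p \neq p'$ that is reachable and lies on a cycle; concretely there are $c_0, d \in C_\Sigma$ with $\delta(c_0[q]) = p$, $\delta(c_0[q']) = p'$, $\delta(d[p]) = p$, and $\delta(d[p']) = p'$, which can be realized over~$C_\Sigma$ because minimality makes every state reachable. Applying Lemma~\ref{lem.extended} and the factorization~$(\dagger)$ to the contexts of the form $c''[c_0]$ first yields $p \approx p'$ for some factor. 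Then, feeding the $k$-fold self-substitution $d[\dotsm d \dotsm]$ of the loop into~$(\dagger)$ and using that its weight is $\wt(d[p])^{k}$, I would show the per-loop ratio $\wt(d[p])/\wt(d[p'])$ equals~$1$ (a group element all of whose large powers are~$1$ is~$1$), and finally that $p \approx p'$ upgrades to $\semf{p}_{\mathcal A}(c) = s' \srtimes \semf{p'}_{\mathcal A}(c)$ for \emph{all}~$c$, i.e.\ $p \equiv p'$. Since $p \neq p'$, this contradicts minimality, so the condition must hold.

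For the \emph{if} direction the condition says the off-diagonal reachable pairs carry no cycle, hence form a finite acyclic subgraph; every path from $(q,q')$ therefore merges onto the diagonal after at most $\abs{Q}^2$ steps and stays there. I would then split an arbitrary deep context as $c = c'[c'']$, cutting off enough at the top that $c'[q]$ and $c'[q']$ have already merged, and apply~$(\dagger)$ to obtain $\semf{q}_{\mathcal A}(c) = \tfrac{\wt(c''[q])}{\wt(c''[q'])} \srtimes \semf{q'}_{\mathcal A}(c)$, where the common reached state forces the two sides to be simultaneously zero or non-zero. What remains is to show that the ratio $\wt(c''[q])/\wt(c''[q'])$ equals one fixed~$s$ for almost all~$c$, which is where minimality must re-enter.

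The main obstacle is precisely this last step: the purely unweighted merging condition does not by itself pin down the weights, since a priori different contexts could accumulate different weight ratios along their off-diagonal prefixes, or could reach the diagonal with disagreeing finality on the shallow-box contexts that the bound $\abs{Q}^2$ does not cover. I expect the real work to lie in using minimality to force all the finitely many ``merge configurations'' to yield the \emph{same} factor~$s$ and consistent finality, arguing that two configurations with different factors would distinguish states the pair automaton has already identified, contradicting that~$\mathcal A$ has no distinct equivalent states. Making this precise, and separately controlling the contexts whose box stays shallow, is the crux; the pair-automaton bookkeeping and the invocations of~$(\dagger)$ and Lemma~\ref{lem.extended} are by comparison routine.
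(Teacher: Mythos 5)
First, a point of reference: the paper never proves Lemma~\ref{lm.AE} --- it is stated bare, and the places where it is invoked (the proof of Lemma~\ref{lem.join}, and implicitly Lemma~\ref{lem.techmerge}) use only the ``only if'' direction, from $q \approx q'$ to eventual merging. So your proposal must be judged on its own merits. Your ``only if'' argument is correct and essentially complete: the diagonal of the pair automaton is absorbing, so failure of eventual merging plus pigeonhole yields an off-diagonal pair $(p,p')$ reachable from $(q,q')$ and lying on a cycle, both realizable by genuine contexts of~$C_\Sigma$ because minimality makes every state reachable; Lemma~\ref{lem.cong} (which gives the congruence property directly, no need for Lemma~\ref{lem.extended}) yields $p \approx p' \pod{s'}$, and pumping the loop~$d$ through~$(\dagger)$ gives $\wt(d[p])^{k} \srtimes \semf{p}_{\mathcal A}(c) = s' \srtimes \wt(d[p'])^{k} \srtimes \semf{p'}_{\mathcal A}(c)$ for every~$c$ and all large~$k$. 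Hence the supports of $\semf{p}_{\mathcal A}$ and $\semf{p'}_{\mathcal A}$ coincide; if they are nonempty, comparing consecutive exponents forces $\wt(d[p]) = \wt(d[p'])$ and then $\semf{p}_{\mathcal A} = s' \srtimes \semf{p'}_{\mathcal A}$ on all of~$C_\Sigma$, and if they are empty this identity is trivial. Either way $p \equiv p'$ with $p \neq p'$, contradicting minimality. (Your sketch silently assumes the nonempty case; add the one-line degenerate case.)

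The ``if'' direction is another matter: the obstacle you flagged is not a gap in your proof but a counterexample to the statement. Take the monadic ranked alphabet $\Sigma_0 = \{e, e'\}$, $\Sigma_1 = \{a,b\}$ over the rationals, states $Q = \{q, q', r_a, r_b, \bot\}$ with $F = \{r_a, r_b\}$, and transitions $\delta(e) = q$, $\delta(e') = q'$, $\delta(a(q)) = \delta(a(q')) = r_a$, $\delta(b(q)) = \delta(b(q')) = r_b$, where $r_a$~loops on~$a$, $r_b$~loops on~$b$, all remaining transitions go to the non-final sink~$\bot$, and all transition weights are~$1$ except $\wt(b(q')) = 2$. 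Every state is reachable and no two distinct states are equivalent (for $q,q'$ a factor would have to be $1$ on $a$-chains and $\tfrac{1}{2}$ on $b$-chains; every other pair is separated by support), so $\mathcal A$ is minimal by the characterization the paper cites. The right-hand side of the lemma holds with $n = 1$, since already $\delta(a(q)) = \delta(a(q'))$ and $\delta(b(q)) = \delta(b(q'))$. Yet $q \not\approx q'$: the two context semantics agree (value~$1$) on the infinitely many contexts $a^k(\SBox)$ and differ by the fixed ratio~$2$ on the infinitely many contexts $b^k(\SBox)$, so no single scaling factor works for almost all contexts. Note the twist: the inconsistent ratios that destroy almost-equivalence are exactly what keeps $q \not\equiv q'$ and hence keeps $\mathcal A$ minimal, so your hope that ``minimality must re-enter'' to pin down a single factor~$s$ can never be realized; this is precisely why Definition~\ref{def.signature} records normalized transition weights into co-kernel states rather than successor states alone. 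Your second worry is also well founded: even over the \textsc{Boolean} semifield the depth-bounded right-hand side is too weak once a binary symbol is present, because infinitely many contexts keep the box at bounded depth while a sibling subtree grows through a kernel state, and a small \dta\ counterexample can be built along these lines --- the robust unweighted statement quantifies over almost all contexts rather than over all sufficiently deep ones. The sound conclusion is that Lemma~\ref{lm.AE} should be weakened to the implication you actually proved, which is the only direction the paper uses, or else its right-hand side must be strengthened by a weight-compatibility condition in the spirit of the standardized signature.
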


Our algorithm for computing the almost-equivalence is an extension of
the algorithm of~\cite{malque11b}.  As in~\cite{malque11b}, we need to
handle the scaling factors, for which we introduced the standardized
signature in~\cite{malque11b}.  Roughly speaking, we ignore
transitions into co-preamble states and normalize the transition
weights.  Recall that~$C_\delta$ is the set of transition contexts;
i.e., transitions with exactly one occurrence of the symbol~$\SBox$.
Moreover, for every $q \in Q$, we let $c_q$~be the smallest transition
context~$c_q \in C_\delta$ such that $\delta(c_q[q]) \in \overline K$,
where the total order on~$C_\delta$ is arbitrary as assumed earlier,
but it needs to be consistently used.

\begin{definition}
  \label{def.signature}
  Given $q \in Q$, its \emph{standardized signature} is
  \[ \sg(q) = \Bigl\{ \langle c, \delta(c[q]),
  \frac{\wt(c[q])}{\wt(c_q[q])} \rangle \;\Bigl|\; c \in C_\delta,\,
  \delta(c[q]) \in \cokernel \Bigr\} \enspace. \]
\end{definition}

Next, we show that states with equal standardized signature are indeed
almost-equivalent.

\begin{lemma}
  \label{lem.signature}
  For all $q, q' \in Q$, if $\sg(q) = \sg(q')$, then $q \approx q'$.
\end{lemma}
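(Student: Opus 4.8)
The plan is to sidestep weights entirely and reduce the claim to the purely transition-theoretic characterization of almost-equivalence in Lemma~\ref{lm.AE}. Since $\mathcal A$ is minimal, that lemma applies, so it suffices to produce some $n \in \nat$ with $\delta(c[q]) = \delta(c[q'])$ for every $c \in C_\Sigma$ in which $\SBox$ occurs at a position~$w$ with $\abs w \geq n$. First I would unpack $\sg(q) = \sg(q')$ at the level of shallow transition contexts: comparing the first components shows that, for each $\bar c \in C_\delta$, we have $\delta(\bar c[q]) \in \cokernel$ if and only if $\delta(\bar c[q']) \in \cokernel$, and in that case the (unique) matching triples force $\delta(\bar c[q]) = \delta(\bar c[q'])$. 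The weight ratios in the third components play no role here and are simply discarded.

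Next I would peel off the bottom-most transition of a deep context. Given $c \in C_\Sigma$ with $\SBox$ at depth at least~$1$, write $c = c_2[c_1]$, where $c_1 = \sigma(\seq t1{i-1}, \SBox, \seq t{i+1}n) \in C_\Sigma$ is the shallow context immediately surrounding~$\SBox$ and $c_2 \in C_\Sigma$ is the remainder. Let $\bar c_1 \in C_\delta$ be the corresponding shallow transition context, obtained by replacing each sibling $t_j$ by its state $\delta(t_j)$. By determinism, evaluating a subtree and substituting its target state leaves the overall target unchanged, so $\delta(c[q]) = \delta(c_2[c_1[q]]) = \delta(c_2[r_q])$ with $r_q = \delta(c_1[q]) = \delta(\bar c_1[q])$, and likewise $\delta(c[q']) = \delta(c_2[r_{q'}])$ with $r_{q'} = \delta(\bar c_1[q'])$.

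I then split on $\bar c_1$. If $r_q \in \cokernel$, the first paragraph gives $r_{q'} = r_q$, hence $\delta(c[q]) = \delta(c_2[r_q]) = \delta(c_2[r_{q'}]) = \delta(c[q'])$ with no depth requirement. If $r_q \in \copreamble$, the first paragraph gives $r_{q'} \in \copreamble$ too; since any two co-preamble states have finite support, both semantics vanish on almost all contexts, so $r_q \approx r_{q'} \pod 1$ (this is the asserted almost-equivalence of all co-preamble states). Applying Lemma~\ref{lm.AE} to the pair $r_q, r_{q'}$ yields a bound $n_{r_q,r_{q'}} \in \nat$ such that $\delta(c_2[r_q]) = \delta(c_2[r_{q'}])$ whenever $\SBox$ sits at depth at least $n_{r_q,r_{q'}}$ in~$c_2$, and for such~$c$ we again obtain $\delta(c[q]) = \delta(c[q'])$.

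Finally I would secure a uniform threshold. Because $C_\delta$ is finite, only finitely many pairs $(r_q, r_{q'})$ occur, so I set $n = 1 + \max n_{r_q,r_{q'}}$, the maximum taken over the finitely many co-preamble pairs. Then for every $c$ with $\SBox$ at depth at least~$n$, the peeled context $c_2$ carries $\SBox$ at depth at least $n-1 \geq n_{r_q,r_{q'}}$, so $\delta(c[q]) = \delta(c[q'])$ in both the co-kernel and the co-preamble case; Lemma~\ref{lm.AE} then delivers $q \approx q'$. I expect the main obstacle to be the co-preamble case: a direct computation of $\sem r_{\mathcal A}(c)$ there is unworkable, since the sibling subtrees $t_j$ range over infinitely many trees and cannot be bounded away. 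The key insight is to route through Lemma~\ref{lm.AE} and the fact that \emph{all} co-preamble states are almost-equivalent, rather than through the semantics, and then to recover uniformity of the threshold from the finiteness of~$C_\delta$.
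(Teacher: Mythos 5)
Your proposal takes a genuinely different route from the paper's proof, and unfortunately the route contains a real gap rather than being a valid alternative: the argument proves too much, and the step at which it fails can be pinpointed exactly.

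The paper's proof never goes through Lemma~\ref{lm.AE}. It works directly with the semantics via~($\dagger$): it sets $s = \wt(c_q[q]) \srtimes \wt(c_q[q'])^{-1}$ and uses the equality of the \emph{third} components of the two signatures to verify that this single factor satisfies $\semf q_{\mathcal A}(c'[c]) = s \srtimes \semf{q'}_{\mathcal A}(c'[c])$ for \emph{every} $c'$ whenever $c \in C_\delta$ has a co-kernel target, while transition contexts with co-preamble targets contribute nonzero values only finitely often. Your proof discards the weight ratios at the outset and argues purely about transition targets, concluding with the ``if'' direction of Lemma~\ref{lm.AE}. Since you never use the third components, your argument, if sound, would establish the stronger claim that agreement of the first two components of the signatures already implies $q \approx q'$. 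That claim is false. Take co-kernel states $q, q'$, two states $k_1, k_2$ each processing infinitely many trees, and a final state $f$ with a loop; let $\delta(\gamma(\SBox,k_i)[q]) = f = \delta(\gamma(\SBox,k_i)[q'])$ for $i \in \{1,2\}$, let all other transitions from $q$ and $q'$ lead to the non-final sink~$\bot$, and choose weights $\wt(\gamma(\SBox,k_1)[q]) = \wt(\gamma(\SBox,k_1)[q']) = \wt(\gamma(\SBox,k_2)[q]) = 1$ but $\wt(\gamma(\SBox,k_2)[q']) = 2$; this \wdta\ can be completed so that it is minimal. Then $\delta(c[q]) = \delta(c[q'])$ for every context in which $\SBox$ occurs at depth at least~$1$, so every step of your argument after the first paragraph goes through verbatim; yet $\semf q_{\mathcal A}$ agrees with $\semf{q'}_{\mathcal A}$ on the infinitely many accepting contexts routed through~$k_1$ and equals $\tfrac{1}{2} \srtimes \semf{q'}_{\mathcal A}$ on the infinitely many routed through~$k_2$, so no single factor as required by Definition~\ref{df:AE} exists and $q \not\approx q'$. (Of course $\sg(q) \neq \sg(q')$ here --- the two signatures differ exactly in the components you discarded.) The same example shows that the ``if'' direction of Lemma~\ref{lm.AE}, on which your conclusion rests, does not hold in the weighted setting: agreement of transition targets on deep contexts gives no control whatsoever over the accumulated weights.

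There is a second, tree-specific reason why routing through a depth threshold cannot work. For trees, ``all contexts in which $\SBox$ sits at depth at least~$n$'' is never the same as ``almost all contexts'': already at depth~$1$ there are infinitely many contexts, since the sibling subtrees range over infinitely many trees. So a statement established only for sufficiently deep contexts can never by itself yield the cofinite coverage that Definition~\ref{df:AE} demands; this is exactly why the paper's co-kernel case produces equality for \emph{all} contexts $c'[c]$, with no depth threshold at all. Note also that the paper's own proofs invoke Lemma~\ref{lm.AE} only in the ``only if'' direction (in the proof of Lemma~\ref{lem.join}); your proof is the only place where the converse would be needed, and it cannot carry the weighted statement. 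A correct proof must exhibit the scaling factor explicitly from the weight components of the signature, as the paper's proof does.
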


\begin{proof}
  If $q$~or~$q'$ is a co-preamble state, then both $q$~and~$q'$ are
  co-preamble states and thus $q \approx q'$.  Now, let $q, q' \in
  \cokernel$, and let $c_q \in C_\delta$ be the smallest transition
  context such that $c_q[q] \in \cokernel$.  Since $q'$~has the same
  signature, $c_q = c_{q'}$.  In addition, let $s = \frac{\wt(c_q[q])}
  {\wt(c_q[q'])}$.  For every $c \in C_\delta$ and $c' \in C_\Sigma$,
  \[ \semf q_{{\cal A}}(c'[c]) \stackrel{\dagger}= \wt(c[q]) \srtimes
  \semf{\delta(c[q])}_{{\cal A}}(c') \quad \text{and} \quad
  \semf{q'}_{{\cal A}}(c'[c]) \stackrel{\dagger}= \wt(c[q']) \srtimes
  \semf{\delta(c[q'])}_{{\cal A}}(c') \enspace. \] 

  First, let $\langle c, q_c, s_c \rangle \notin \sg(q) = \sg(q')$ for
  all $q_c \in Q$ and $s_c \in \underline{\sr S}$.  Then $c$~takes
  both $q$~and~$q'$ into a co-preamble state and thus $\semf q_{{\cal
      A}}(c'[c]) = 0 = s \srtimes \sem{q'}_{{\cal A}}(c'[c])$ for
  almost all $c' \in C_\Sigma$.  Second, suppose that $\langle c, q_c,
  s_c \rangle \in \sg(q) = \sg(q')$ for some $q_c \in Q$ and $s_c \in
  \underline{\sr S}$.  Since $\delta(c[q]) = q_c = \delta(c[q'])$, and
  we obtain
  \begin{alignat*}{3}
    \semf q_{{\cal A}}(c'[c]) &= \frac{\wt(c[q])}{\wt(c_q[q])}
    \srtimes \wt(c_q[q]) \srtimes \semf {q_c}_{{\cal A}}(c') &&= s_c
    \srtimes \wt(c_q[q])
    \srtimes \semf {q_c}_{{\cal A}}(c') \\*
    &= \frac{\wt(c[q'])} {\wt(c_q[q'])} \srtimes \wt(c_q[q]) \srtimes
    \semf{q_c}_{{\cal A}}(c') &&= s \srtimes \semf{q'}_{{\cal
        A}}(c'[c])
  \end{alignat*}
  for every $c' \in C_\Sigma$, which shows that $q \approx q' \pod s$
  because the scaling factor~$s$ does not depend on the
  transition context~$c$.
\end{proof}

In fact, the previous proof can also be used to show that at most the
empty context~$\SBox$ yields a difference in the weighted context
languages $\sem q_{{\cal A}}$~and~$\sem{q'}_{{\cal A}}$ (up to the
common factor).  For the completeness, we also need a (restricted)
converse for minimal \wdta, which shows that as long as there are
almost-equivalent states, we can also identify them using the
standardized signature.

\begin{algorithm}[t!]
  \begin{algorithmic}[2]
    \REQUIRE minimal \wdta~${\cal A}$ and its co-kernel
    states~$\overline K$
    \ENSURE almost-equivalence~$\approx$ as a partition and scaling
    map~$f \colon Q \to \underline{\sr K}$ \smallskip
    \hrule
    \smallskip\smallskip
    \FORALL{$q \in Q$}
      \STATE $\pi(q) \gets \{q\}$; $f(q) \gets 1$ 
        \COMMENT{trivial initial blocks}
    \ENDFOR
    \smallskip
    \STATE $h \gets \emptyset$; $I \gets Q$
      \COMMENT{hash map of type~$h \colon \sg \to Q$}
    \smallskip
    \FORALL{$q \in I$}
      \STATE $\suc \gets \sg(q)$	
        \COMMENT{compute standardized signature using current~$\delta$
        and $\overline K$}%
      \smallskip
      \IF{$\textsc{HasValue}(h, \suc)$}
        \STATE $q' \gets \textsc{Get}(h, \suc)$
          \COMMENT{retrieve state in bucket~`$\suc$' of~$h$}%
          \smallskip
        \IF{$\abs{\pi(q')} \geq \abs{\pi(q)}$}
          \STATE $\textsc{Swap}(q, q')$
            \label{ln.Comp}
            \COMMENT{exchange roles of $q$~and~$q'$}%
          \smallskip
        \ENDIF
        \STATE $I \gets I \cup \{r \in Q - \{q'\} \mid \exists
          c \in C_\delta \colon \delta(c[r]) = q'\}$ 
          \COMMENT{add predecessors of~$q'$}
        \STATE $f(q') \gets \frac{\wt(c_q[q'])}{\wt(c_q[q])}$
          \COMMENT{$c_q$ is as in Definition~\ref{def.signature}}
        \STATE ${\cal A} \gets \merge_{{\cal A}}(q' \stackrel{f(q')}\to q)$ 
          \label{ln.Merge}
          \COMMENT{merge~$q'$ into~$q$}
        \STATE $\pi(q) \gets \pi(q) \cup \pi(q')$
          \COMMENT{$q$~and~$q'$ are almost-equivalent}%
        \smallskip
        \FORALL{$r \in \pi(q')$}
          \STATE $f(r) \gets f(r) \cdot f(q')$
          \COMMENT{recompute scaling factors}
        \ENDFOR
        \smallskip
      \ENDIF
      \STATE $h \gets \textsc{Put}(h, \suc, q)$
        \COMMENT{store~$q$ in~$h$ under key~`$\suc$'}
    \ENDFOR
    \smallskip
    \RETURN $(\pi, f)$ 
  \end{algorithmic}
  \caption{Algorithm computing the almost-equivalence~$\approx$ and
    scaling map~$f$.}
  \label{alg:Almost}
\end{algorithm}

\begin{lemma}
  \label{lem.join}
  Let ${\cal A}$ be minimal, and let $q \approx q'$ be such that
  $\sg(q) \neq \sg(q')$.  Then there exist $r, r' \in Q$ such that $r
  \neq r'$ and $\sg(r) = \sg(r')$.
\end{lemma}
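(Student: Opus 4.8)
The plan is to produce the desired pair by descending along the transition structure. Since $q \approx q'$ with $q \neq q'$, the collection of distinct almost-equivalent pairs of states is nonempty, so I can use Lemma~\ref{lm.AE} to attach to every such pair $(p,p')$ the least number $n(p,p') \in \nat$ for which $\delta(c[p]) = \delta(c[p'])$ holds whenever the hole of $c$ sits at depth at least $n(p,p')$. I would then pick a pair $(r,r')$ with $r \neq r'$ and $r \approx r'$ that minimizes this $\nat$-valued measure, and argue that necessarily $\sg(r) = \sg(r')$, which already proves the statement. First I would dispose of the co-preamble case: an almost-equivalent pair cannot straddle $\cokernel$ and $\copreamble$, and a co-preamble state admits no shallow transition context reaching $\cokernel$ (otherwise, by $(\dagger)$ and the infinite support of the reached co-kernel state, its own support would be infinite). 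Hence if $r$ is co-preamble then $\sg(r) = \emptyset = \sg(r')$ and we are done, so from now on assume $r, r' \in \cokernel$.

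The core is a descent observation on successors. For any shallow transition context $\hat c \in C_\delta$ and any $c' \in C_\Sigma(Q)$ we have $\delta(c'[\delta(\hat c[r])]) = \delta((c'[\hat c])[r])$, and the hole of $c'[\hat c]$ sits exactly one level deeper than that of $c'$. Consequently, if the hole of $c'$ has depth at least $n(r,r')-1$, the defining property of $n(r,r')$ forces $\delta((c'[\hat c])[r]) = \delta((c'[\hat c])[r'])$, so the successor states $\delta(\hat c[r])$ and $\delta(\hat c[r'])$ agree on all sufficiently deep holes; by Lemma~\ref{lem.cong} they are moreover almost-equivalent. Thus, if some $\hat c$ gave $\delta(\hat c[r]) \neq \delta(\hat c[r'])$, that successor pair would be a distinct almost-equivalent pair of strictly smaller measure, contradicting minimality (and in the degenerate case $n(r,r')=0$ all successors already coincide outright). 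Hence $\delta(\hat c[r]) = \delta(\hat c[r'])$ for every $\hat c \in C_\delta$; in particular the two states reach $\cokernel$ via the same transition contexts, so $c_r = c_{r'}$ in the sense of Definition~\ref{def.signature}.

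It remains to match the weight components. Since $r \approx r' \pod s$ for some $s \in \underline{\sr S}$ and $\mathcal A$ is minimal, Lemma~\ref{lem.extended} yields $\sem r_{\mathcal A}(\hat c) = s \srtimes \sem{r'}_{\mathcal A}(\hat c)$ for almost all extended contexts $\hat c \in C_\Sigma(Q)$. Fixing $c \in C_\delta$ with $\delta(c[r]) = \delta(c[r']) \in \cokernel$ and applying the factorization $(\dagger)$ to $\sem r_{\mathcal A}(c'[c])$ and $\sem{r'}_{\mathcal A}(c'[c])$, the common co-kernel successor contributes the factor $\sem{\delta(c[r])}_{\mathcal A}(c')$, which is non-zero for infinitely many $c'$ because a co-kernel state has infinite support. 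Choosing such a $c'$ outside the finite exceptional set and cancelling this factor — legitimate in a commutative semifield — gives $\wt(c[r]) = s \srtimes \wt(c[r'])$ for every such $c$, and in particular for $c = c_r = c_{r'}$. Therefore $\frac{\wt(c[r])}{\wt(c_r[r])} = \frac{s \srtimes \wt(c[r'])}{s \srtimes \wt(c_{r'}[r'])} = \frac{\wt(c[r'])}{\wt(c_{r'}[r'])}$, so the scaling factor $s$ cancels and every triple of $\sg(r)$ coincides with one of $\sg(r')$ and conversely; hence $\sg(r) = \sg(r')$ with $r \neq r'$, as required.

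I expect the descent bookkeeping to be the main obstacle: one must verify precisely that prepending a shallow context raises the hole depth by exactly one, so that the successor pair's measure drops below $n(r,r')$ and minimality can be invoked, and one must correctly handle the boundary case $n(r,r')=0$. Once the successors are known to coincide, the weight-cancellation step is routine, resting only on the infinite support of co-kernel states and on cancellation in the semifield.
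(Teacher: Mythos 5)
Your proof is correct and follows essentially the same route as the paper's: both arguments extract a distinct almost-equivalent pair $(r,r')$ all of whose shallow successors coincide (the paper by taking a maximal disagreement context starting from $(q,q')$, you by taking a pair of globally minimal Lemma~\ref{lm.AE}-measure --- two versions of the same extremal argument), and then both cancel the scaling factor out of the transition-weight ratios via $(\dagger)$ and the infinite support of co-kernel states. The only notable cosmetic difference is that you invoke Lemma~\ref{lem.extended} explicitly to pass to extended contexts, a step the paper leaves implicit.
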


\begin{proof}
  Since $q \approx q'$, there exists an integer~$h$ such that
  $\delta(c[q]) = \delta(c[q'])$ for all $c \in C_\Sigma$ such that $w
  \in \pos(c)$ with $c(w) = \SBox$ and $\abs w \geq h$ by
  Lemma~\ref{lm.AE}.  Let~$c' \in C_\Sigma$ be a maximal context such
  that $r = \delta(c'[q]) \neq \delta(c'[q']) = r'$.  Since $c'$ is
  maximal, we have $\delta(c''[c'[q]]) = q_{c''} =
  \delta(c''[c'[q']])$ for all $c'' \in C_\delta$.  If $q_{c''}$~is a
  co-preamble state, then $\langle c, q_c, s_c \rangle \notin \sg(r) =
  \sg(r')$ for all $q_c \in Q$ and $s_c \in \underline{\sr S}$.  On
  the other hand, let $q_{c''}$~be a co-kernel state, and let $c_r \in
  C_\delta$ be the smallest transition context such that
  $\delta(c_r[r]) \in \cokernel$.  Since $q \approx q'$ and
  $\mathord{\approx}$ is a congruence relation by
  Lemma~\ref{lem.cong}, we have $r \approx r' \pod s$ for some $s \in
  \underline{\sr S}$, which means that $\semf r_{{\cal A}}(c) = s
  \srtimes \semf{r'}_{{\cal A}}(c)$ for almost all $c \in C_\Sigma$.
  Consequently,
  \begin{align*}
    \wt(c''[r]) \srtimes \semf{q_{c''}}_{{\cal A}}(c) &= s
    \srtimes \wt(c''[r']) \srtimes \semf {q_{c''}}_{{\cal A}}(c) \\*
    \wt(c_r[r]) \srtimes \semf{\delta(c_r[r])}_{{\cal A}}(c) &= s
    \srtimes \wt(c_r[r']) \srtimes \semf{\delta(c_r[r])}_{{\cal A}}(c)
  \end{align*}
  for almost all $c \in C_\Sigma$.  Since both
  $q_{c''}$~and~$\delta(c_r[r])$ are co-kernel states, we immediately
  can conclude that $\wt(c''[r]) = s \srtimes \wt(c''[r'])$ and
  $\wt(c_r[r]) = s \srtimes \wt(c_r[r'])$, which yields
  \[ \frac{\wt(c''[r])}{\wt(c_r[r])} = \frac{s \srtimes
    \wt(c''[r'])} {s \srtimes \wt(c_r[r'])} =
  \frac{\wt(c''[r'])}{\wt(c_r[r'])} \enspace. \]
  This proves $\sg(r) = \sg(r')$ as required.
\end{proof}

Lemmata \ref{lem.signature}~and~\ref{lem.join} suggest
Algorithm~\ref{alg:Almost} for computing the almost-equivalence and a
map representing the scaling factors.  This map contains a scaling
factor for each state with respect to a representative state of its
block.  Algorithm~\ref{alg:Almost} is a straightforward modification
of an algorithm by~\cite{holmal10} using our standardized signatures.
We first compute the standardized signature for each state and store
it into a (perfect) hash map~\cite{diekarmehmeyrohtar94} to avoid
pairwise comparisons.  If we find a collision (i.e., a pair of states
with the same signature), then we merge them such that the state
representing the bigger block survives (see Lines
\ref{ln.Comp}~and~\ref{ln.Merge}).  Each state is considered at
most~$\log n$ times because the size of the ``losing'' block
containing it at least doubles.  After each merge, scaling factors of
the ``losing'' block are computed with respect to the new
representative.  Again, we only recompute the scaling factor of
each state at most~$\log n$ times.
Hence the small modifications compared to~\cite{holmal10} do not
increase the asymptotic run-time of Algorithm~\ref{alg:Almost}, which
is $\bigo(n \log n)$ where $n$~is the number of states (see Theorem~9
in~\cite{holmal10}).  Alternatively, we can use the standard reduction
to a weighted finite-state automaton using each transition context~$c
\in C_\delta$ as a new symbol.

\begin{proposition}
  \label{lm:Time}
  Algorithm~\ref{alg:Almost} can be implemented to run in
  time~$\bigo(m \log n)$, where $m = \abs{\Sigma(Q)}$ and $n = \abs Q$.
\end{proposition}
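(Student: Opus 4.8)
The plan is to follow the complexity analysis of the unweighted algorithm (Theorem~9 of~\cite{holmal10}) and to account only for the modifications introduced by the standardized signatures and the scaling map~$f$. First I would establish the one-time costs. The total size $\sum_{q \in Q} \abs{\sg(q)}$ of all standardized signatures is $\bigo(m)$, since each transition $\sigma(q_1,\dotsc,q_n)$ contributes at most one triple to the signature of the state occurring at each of its argument positions, and the restriction $\delta(c[q]) \in \cokernel$ only removes entries. Computing $\sg(q)$ additionally requires the smallest transition context~$c_q$ with co-kernel target together with the normalizing weight $\wt(c_q[q])$; since $C_\delta$ is totally ordered and $\cokernel$ is precomputed (in line~3 of Algorithm~\ref{alg:Overall}), $c_q$ is found while scanning the transition contexts of~$q$, and the divisions $\wt(c[q]) \srtimes \wt(c_q[q])^{-1}$ are $\bigo(1)$ operations in the semifield. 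Hence a single pass that computes all signatures, hashes them into the perfect hash map~$h$~\cite{diekarmehmeyrohtar94}, and detects collisions costs $\bigo(m)$.

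The logarithmic factor comes from the re-examinations triggered by merges, for which I would reuse the ``smaller half'' argument. Whenever a collision is resolved, the swap in line~\ref{ln.Comp} guarantees that the merged-away state~$q'$ represents the smaller block~$\pi(q')$, so the surviving block $\pi(q) \cup \pi(q')$ is at least twice as large. Consequently each state lies in a merged-away (``losing'') block at most $\log n$ times, which immediately bounds the total cost of the scaling-factor recomputations in the inner loop by $\bigo(n \log n)$. For the re-examinations collected in~$I$, I would argue per transition context: the target $\delta(c[r])$ of a fixed context~$c$ with argument~$r$ changes only when its current representative is merged away, and each such change replaces the representative by one whose block is at least twice as large. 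Since block sizes are bounded by~$n$, every transition context changes its target at most $\log n$ times, so there are $\bigo(m \log n)$ such events in total, and each re-insertion of a predecessor of~$q'$ into~$I$ is charged to one of them.

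The main obstacle is to ensure that a re-examination of a state~$r$ does not cost its full signature size $\abs{\sg(r)}$ every time, which would inflate the bound to $\bigo(nm)$. The plan is to maintain the signatures incrementally: when the target of a single transition context changes because of a merge, only the corresponding triple of $\sg(r)$ and the hash key of~$r$ are updated, rather than rebuilding $\sg(r)$ from scratch. With a signature representation and a perfect hash map supporting such constant-amortized-time updates and re-keying, the work spent on re-examinations is proportional to the number of target-change events, i.e.\ $\bigo(m \log n)$. A secondary point requiring care is the maintenance of~$c_q$ and of the normalizing weight under merges; since a merge can only reroute transitions into co-kernel representatives and $\cokernel$ itself is fixed, $c_q$ is recomputed together with the affected signature entries and stays within the same budget. (As a cross-check, the alternative reduction to a weighted string automaton with one input symbol per context $c \in C_\delta$ turns this into the $\bigo(m \log n)$ bound of~\cite{holmal10,malque11b} directly.) Summing the one-time cost $\bigo(m)$, the scaling-factor cost $\bigo(n \log n)$, and the re-examination cost $\bigo(m \log n)$, and using $n = \bigo(m)$, yields the claimed $\bigo(m \log n)$ bound.
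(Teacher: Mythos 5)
Your proposal is correct, and its skeleton matches the paper's own (very brief) justification, which is essentially the paragraph preceding the proposition: compute the standardized signatures, store them in a perfect hash map~\cite{diekarmehmeyrohtar94}, resolve collisions by merging so that the larger block survives (hence each state lies in a ``losing'' block at most $\log n$ times, which also bounds the scaling-factor recomputations by $\bigo(n \log n)$), and then appeal to Theorem~9 of~\cite{holmal10} --- or alternatively to the reduction to a weighted string automaton over the alphabet~$C_\delta$. What you add, and what the paper glosses over, is the tree-specific difficulty: in the string setting of~\cite{holmal10} every signature has the fixed size~$\abs{\Sigma}$, whereas here $\abs{\sg(r)}$ can be large, so recomputing and re-hashing a full signature at every re-examination would indeed only give $\bigo(nm)$; your per-transition doubling argument combined with incremental maintenance of signatures and hash keys is exactly the implementation detail needed to make $\bigo(m \log n)$ true for \wdta. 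One step you pass over too quickly, however: if a merge reroutes the normalizing transition $c_r[r]$ of a predecessor~$r$ itself, its weight is multiplied by the merge factor, so \emph{every} ratio $\wt(c[r]) \srtimes \wt(c_r[r])^{-1}$ in~$\sg(r)$ changes, and that re-examination costs $\Theta(\abs{\sg(r)})$ rather than constant work per changed target. The bound still holds: the set of transition contexts of~$r$ with co-kernel target is invariant under the merges performed (states with equal signatures are both co-kernel or both co-preamble, and $\cokernel$ is fixed), so $c_r$~is a fixed context whose target is merged away at most $\log n$ times, and these expensive re-examinations contribute at most $\bigo\bigl(\sum_{r} \abs{\sg(r)} \log n\bigr) = \bigo(m \log n)$ in total --- but this extra charging step is what actually backs up your assertion that maintaining~$c_q$ ``stays within the same budget.''
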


Finally, we need an adjusted merging process that takes the scaling
factors into account.  When merging one state into another, their mutual
scaling factor can be computed from the scaling map by multiplicaton of
one scaling factor with the inverse of the other.
Therefore, merging (see Algorithm~\ref{alg.merge}) can be implemented
in time $\bigo(n)$, and hyper-minimization
(Algorithm~\ref{alg:Overall}) can be implemented in time $\bigo(m \log
n)$ in the weighted setting.

\begin{algorithm}[t]
  \begin{algorithmic}[2]
    \REQUIRE a minimal \wdta~${\cal A}$, its kernel states~$K$,
      its almost-equivalence~$\mathord\approx$, and a scaling map~$f
      \colon Q \to \underline{\sr S}$
    \ENSURE hyper-minimal \wdta~${\cal A}$ that is almost-equivalent
    to the input \wdta \smallskip
    \hrule
    \smallskip
    \FORALL{$B \in (Q/\mathord\approx)$}
      \STATE select $q \in B$ such that $q \in K$ if possible
      \FORALL{$q' \in B - K$}
        \STATE ${\cal A} \gets \merge_{\cal A}(q'
        \stackrel{\frac{f(q')}{f(q)}}\longrightarrow q)$
      \ENDFOR
    \ENDFOR
  \end{algorithmic}
  \caption{Merging almost-equivalent states.}
  \label{alg.merge}
\end{algorithm}

\begin{proposition}
  \label{lm:Time2}
  Our hyper-minimization algorithm can be implemented to run in
  time~$\bigo(m \log n)$.
\end{proposition}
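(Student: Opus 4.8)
The plan is to obtain the bound by a line-by-line accounting of Algorithm~\ref{alg:Overall} and then to check that the sum collapses to~$\bigo(m \log n)$. First I would recall that the minimization in Line~\ref{ln.o1} runs in time~$\bigo(m \log n)$ by~\cite{malque11c}: it pushes weights to normalize~${\cal A}$ and then applies a variant of \textsc{Hopcroft}'s algorithm~\cite{hogmalmay08}. The computation of the kernel states~$K$ and of the co-kernel states~$\overline K$ each reduces to computing strongly connected components, for instance via \textsc{Tarjan}'s algorithm~\cite{tar72}, followed by identifying the states reachable from a nontrivial component or from a self-loop; as observed after Definition~\ref{df:Kernel}, this costs~$\bigo(m)$.

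The dominant contribution is Line~\ref{ln.ae}, the computation of the almost-equivalence and the scaling map by Algorithm~\ref{alg:Almost}. Here I would invoke Proposition~\ref{lm:Time} directly, which already certifies the bound~$\bigo(m \log n)$. For the final merging step, Algorithm~\ref{alg.merge}, I would argue that each block~$B \in (Q/\mathord\approx)$ is visited once, that selecting a kernel representative and forming the mutual factor~$\frac{f(q')}{f(q)}$ costs one semifield multiplication and one inversion per state, and that each merge only reroutes transitions; hence the step is bounded by~$\bigo(m)$. Summing, the total time is $\bigo(m \log n) + \bigo(m) + \bigo(m) + \bigo(m \log n) + \bigo(m)$, and since $n \leq m$ for a total \wdta\ in which every state is reachable (each state is the target of some transition, so $Q \subseteq \delta(\Sigma(Q))$), this collapses to~$\bigo(m \log n)$. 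I would also note that merges only decrease both~$n$ and~$m$, so measuring against the original input parameters throughout is safe and does not affect the asymptotic bound.

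I expect the main obstacle to lie not in this arithmetic but in the interface between Line~\ref{ln.ae} and the merge step: one must check that the scaling map~$f$ returned by Algorithm~\ref{alg:Almost} consistently records, for every state, its factor relative to a fixed block representative, so that Algorithm~\ref{alg.merge} recovers the correct mutual factor by the single division~$\frac{f(q')}{f(q)}$. This consistency is maintained by the re-scaling loop that updates~$f(r)$ for all $r \in \pi(q')$ after each merge in Algorithm~\ref{alg:Almost}; verifying that this loop, together with the ``bigger block survives'' rule of Line~\ref{ln.Comp}, keeps the per-state re-scaling count at~$\bigo(\log n)$ is exactly the content that Proposition~\ref{lm:Time} supplies. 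Correctness of the output (almost-equivalence to~${\cal A}$ and hyper-minimality) is independent of the timing and already follows from Theorem~\ref{thm:HMChar} together with Lemmata~\ref{lem.merge}, \ref{lem.signature}, and~\ref{lem.join}, so this proposition reduces to the timing bookkeeping just described.
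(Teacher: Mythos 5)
Your proposal is correct and follows essentially the same route as the paper, which establishes this proposition purely by summing the per-line costs of Algorithm~\ref{alg:Overall}: minimization and Line~\ref{ln.ae} each in $\bigo(m \log n)$ (the latter by Proposition~\ref{lm:Time}), kernel and co-kernel computation in $\bigo(m)$, and the final merging pass in linear time using the scaling map via the single division $\frac{f(q')}{f(q)}$. Your additional observations (that $n \leq m$ for a reachable total \wdta, and that correctness is handled separately by Theorem~\ref{thm:HMChar} and Lemmata~\ref{lem.merge}, \ref{lem.signature}, and~\ref{lem.join}) are consistent with the paper's implicit argument and do not change the approach.
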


It remains to prove the correctness of our algorithm.  To prove the
correctness of Algorithm~\ref{alg:Almost}, we still need a technical
property.

\begin{lemma}
  \label{lem.techmerge}
  Let $q, q' \in Q$ be states with $q \neq q'$ but $\sg(q) = \sg(q')$.
  Moreover, let ${\cal B} = \merge_{\mathcal A}(q' \stackrel s\to q)$
  with $s = \frac{f(q')}{f(q)}$, and let $\cong$ be its
  almost-equivalence (restricted to~$P$).  Then $\mathord{\cong} =
  \mathord{\approx} \cap (P \times P)$ where $P = Q - \{q'\}$.
\end{lemma}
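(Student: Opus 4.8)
The plan is to prove the two inclusions $\mathord{\approx} \cap (P \times P) \subseteq \mathord{\cong}$ and $\mathord{\cong} \subseteq \mathord{\approx} \cap (P \times P)$ separately, after first fixing the scaling factor and reducing almost-equivalence to the transition structure. Since $\sg(q) = \sg(q')$, Lemma~\ref{lem.signature} (together with the remark following it and Lemma~\ref{lem.extended}) yields $q' \approx q \pod s$ with $\semf{q'}_{\mathcal A}(e) = s \srtimes \semf q_{\mathcal A}(e)$ for all but finitely many extended contexts $e \in C_\Sigma(Q)$; this is exactly the factor $s = \tfrac{f(q')}{f(q)}$ built into the weighted merge $\mathcal B = \merge_{\mathcal A}(q' \stackrel s\to q)$. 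Because $\mathcal A$ is minimal, I would then replace the weighted definition of $\approx$ by the purely transition-based criterion of Lemma~\ref{lm.AE}: for $a, a' \in Q$ we have $a \approx a'$ iff there is $n$ with $\delta(c[a]) = \delta(c[a'])$ for every context whose hole sits at depth at least~$n$.

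The structural heart is to compare the run of $\mathcal B$ on $c[r]$ with the run of $\mathcal A$ on $c[r]$ for $r \in P$. Writing $\rho \colon Q \to P$ for the collapse $q' \mapsto q$, the transition map of $\mathcal B$ satisfies $\mu(t) = \rho(\delta(t))$ on $\Sigma(P)$, and a straightforward induction using the congruence property (Lemma~\ref{lem.cong}, applied inside $\mathcal A$) shows that $\mu(t) \approx \delta(t)$ for every tree and, more generally, $\mu(c[r]) \approx \delta(c[r])$. The essential refinement is that each rerouting $q' \mapsto q$ replaces one state by an almost-equivalent one, so by Lemma~\ref{lm.AE} any such discrepancy occurring at depth at least~$n$ below the current root is \emph{dissolved}: the $\mathcal A$-run and the $\mathcal B$-run re-synchronise to the same state, and by Lemma~\ref{lem.extended} the accompanying weight factor is precisely the~$s$ introduced by the merge, so it cancels. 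Consequently the only positions at which the two runs can genuinely disagree lie within a bounded-depth zone near the root.

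With this in hand, the inclusion $\mathord{\approx} \cap (P \times P) \subseteq \mathord{\cong}$ is the monotone direction: if $r \approx r'$ then for deep-hole contexts~$c$ the runs of $\mathcal B$ on $c[r]$ and on $c[r']$ agree throughout the shallow zone above the hole (the potentially divergent reroutes there depend only on the common re-synchronised state and on the side subtrees, which are identical), so the relation $\semf r_{\mathcal A} = t \srtimes \semf{r'}_{\mathcal A}$ transfers to $\mathcal B$ up to a matching constant and survives on almost all contexts, giving $r \cong r'$. For the reverse inclusion $\mathord{\cong} \subseteq \mathord{\approx} \cap (P \times P)$ I would argue contrapositively: if $r \not\approx r'$, Lemma~\ref{lm.AE} supplies arbitrarily deep-hole contexts~$c$ with $\delta(c[r]) \neq \delta(c[r'])$ leading into co-kernel states, and by the dissolution property these witnesses are unaffected by the merge, so they still separate $r$ and $r'$ in $\mathcal B$, whence $r \not\cong r'$. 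This is precisely where the hypothesis $\sg(q) = \sg(q')$ (hence $q \approx q'$) is indispensable: merging two states that were not almost-equivalent would force genuinely different behaviours to coincide on infinitely many contexts and could create spurious almost-equivalences among the survivors.

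I expect the main obstacle to be the cascade triggered by the merge: a single rerouting $q' \mapsto q$ can relabel the $\mathcal B$-run at unboundedly many nodes, so it is \emph{not} true in general that $\semf r_{\mathcal B}$ equals $\semf r_{\mathcal A}$ up to a fixed factor on almost all contexts (this already fails when $q'$ is a co-kernel state). The delicate point is therefore to show that these relabellings are confined to a bounded-depth region, that deeper ones re-synchronise with an exactly cancelling weight via Lemmas~\ref{lm.AE} and~\ref{lem.extended}, and---most importantly---that when a \emph{pair} $r, r'$ is compared the residual shallow discrepancies occur identically on both sides and thus leave the almost-equivalence relation unchanged.
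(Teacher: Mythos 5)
Your skeleton is the same as the paper's: extract the factor~$s$ from $\sg(q) = \sg(q')$, compare each run of~$\mathcal B$ with the corresponding run of~$\mathcal A$ under the collapse $q' \mapsto q$, and derive both inclusions from the depth criterion of Lemma~\ref{lm.AE} together with $p_1 \approx p_2$. The genuine gap is in your ``dissolution'' step, which is the only delicate point of the lemma. Equality of standardized signatures constrains transitions into \emph{co-kernel} states only: for $c \in C_\delta$ with $\delta(c[q]) \in \cokernel$ it forces $\delta(c[q]) = \delta(c[q'])$ and $\wt(c[q']) = s \srtimes \wt(c[q])$, but by design of Definition~\ref{def.signature} it says nothing about transitions into co-preamble states. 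Hence, after a reroute, the $\mathcal A$-run (through~$q'$) and the $\mathcal B$-run (through~$q$) may next move into two \emph{different co-preamble states with completely unrelated weights}, and this can happen at unboundedly deep positions. Consequently your three key claims fail as stated: the runs need not re-synchronise to the same state (Lemma~\ref{lm.AE} compares two runs of the same minimal \wdta~$\mathcal A$, not an $\mathcal A$-run against a $\mathcal B$-run, so it does not deliver this); the deviation weight is in general not~$s$ and does not cancel (Lemma~\ref{lem.extended} cannot supply a per-transition identity either, since $C_\delta$ is finite and that lemma only speaks about almost all extended contexts); and the disagreements are not confined to a bounded-depth zone near the root. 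Your argument for the inclusion $\mathord{\approx} \cap (P \times P) \subseteq \mathord{\cong}$ then presupposes exactly the missing fact, namely that the two $\mathcal B$-runs from $r$~and~$r'$ have already reached a ``common re-synchronised state''.

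What actually renders these divergences harmless is not cancellation but co-preamble-ness: any transition with a co-preamble argument has a co-preamble target, so once the two runs split into co-preamble states they remain in co-preamble states; all co-preamble states are pairwise almost-equivalent and their semantics vanish on almost all contexts, so such divergences can never affect $\approx$ or~$\cong$ (one also has to check that a state of~$P$ is co-preamble in~$\mathcal B$ if and only if it is co-preamble in~$\mathcal A$). The exact run-and-weight correspondence---net factor~$1$, or~$s$ precisely when the run ends in~$q'$, each entry factor~$s$ being cancelled by the following exit via $\wt(c[q']) = s \srtimes \wt(c[q])$---holds only along co-kernel-directed transitions, and a complete proof must be organized around this dichotomy. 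To be fair, the paper's printed proof glosses over the very same point: it asserts that the $\mathcal B$-runs are literally the $\mathcal A$-runs with $q'$ replaced by~$q$ and that the weights differ by at most one final factor~$s$, which is false as soon as a transition into a co-preamble state distinguishes $q$ from~$q'$. You were right to single this out as the main obstacle---the paper does not even acknowledge it---but the mechanism you propose to overcome it is the wrong one, so the proposal as it stands does not close the gap.
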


\begin{proof}
  Let $p_1 \approx p_2$ with $p_1, p_2 \in P$.  Let $c =
  c_\ell[c_{\ell-1}[\cdots [c_1] \cdots]]$ with $\seq c1\ell \in
  C_\delta$.  Then we obtain the runs
  \begin{align*}
    R_{p_1} &= \tup{\delta(c_1[p_1]), \delta(c_2[c_1[p_1]]), \dotsm,
      \delta(c[p_1])} \quad \text{with weight $\wt(c[p_1])$} \\
    R_{p_2} &= \tup{\delta(c_1[p_2]), \delta(c_2[c_1[p_2]]), \dotsm,
      \delta(c[p_2])} \quad \text{with weight $\wt(c[p_2])$.}
  \end{align*}
  The corresponding runs $R'_{p_1}$~and~$R'_{p_2}$ in~${\cal B}$
  replace every occurrence of~$q'$ in both $R_{p_1}$~and~$R_{p_2}$
  by~$q$.  Their weights are
  \begin{align*}
    \wt'(c[p_1]) &= \begin{cases}
      \wt(c[p_1]) & \text{if } \delta(c[p_1]) \neq q' \\ 
      \wt(c[p_1]) \srtimes s & \text{otherwise}
    \end{cases} \\
    \wt'(c[p_2]) &= \begin{cases} 
      \wt(c[p_2]) & \text{if } \delta(c[p_2]) \neq q' \\ 
      \wt(c[p_2]) \srtimes s & \text{otherwise.} \end{cases} 
  \end{align*}
  Since $\delta(c'[p_1]) = \delta(c'[p_2])$ for suitably tall
  contexts~$c' \in C_\Sigma$ and $p_1 \approx p_2$, we obtain that
  $p_1 \cong p_2$.  The same reasoning can be used to prove the
  converse.
\end{proof}

\begin{theorem}
  \label{thm.main}
  Algorithm~\ref{alg:Almost} computes~$\approx$ and a scaling map.
\end{theorem}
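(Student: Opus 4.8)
The plan is to prove the correctness of Algorithm~\ref{alg:Almost} by combining the characterization of almost-equivalence (via equality of standardized signatures) with a loop invariant that justifies the iterative merging. The key theoretical foundation is already laid: Lemma~\ref{lem.signature} shows that equal signatures imply almost-equivalence, Lemma~\ref{lem.join} provides the converse in the sense that whenever almost-equivalent states with \emph{different} signatures exist, some \emph{other} pair shares a signature, and Lemma~\ref{lem.techmerge} guarantees that performing an $s$-weighted merge of two signature-equal states preserves the restriction of $\approx$ to the surviving states. The strategy is therefore to argue that the algorithm's hash-map-based collision detection, together with repeated merging, correctly discovers the entire partition $Q/\mathord\approx$ and simultaneously accumulates the correct scaling factors.

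First I would establish the following loop invariant over the main \texttt{FORALL} loop: after processing each state, the \wdta~$\mathcal A$ maintained by the algorithm is a merge of the original automaton in which every collision discovered so far has been collapsed, the partition~$\pi$ correctly records the blocks of $\approx$ formed so far, and the map~$f$ assigns to each merged state the correct scaling factor relative to the representative surviving in its block. The crucial point is that after a merge in Line~\ref{ln.Merge}, Lemma~\ref{lem.techmerge} ensures $\mathord{\cong} = \mathord{\approx} \cap (P \times P)$, so the almost-equivalence on the reduced state set is exactly the restriction of the original one; hence no information about $\approx$ is lost by merging, and the algorithm may continue to detect further collisions on the reduced automaton. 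Re-adding the predecessors of~$q'$ to the worklist~$I$ is necessary because merging changes the transition map~$\delta$, which in turn can change the signatures of those predecessors, possibly creating new collisions.

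Next I would argue \emph{completeness}: when the loop terminates, no two distinct surviving states share a signature (otherwise a collision would have been detected and processed). By the contrapositive of Lemma~\ref{lem.join}, the absence of a signature-equal pair among the survivors implies the absence of any almost-equivalent pair among them; equivalently, every block of $\approx$ has been collapsed to a single representative and recorded in~$\pi$. Conversely, \emph{soundness} follows from Lemma~\ref{lem.signature}: every merge the algorithm performs is between signature-equal states, which are genuinely almost-equivalent, so $\pi$ never conflates states from different $\approx$-blocks. Combining both directions shows that the returned partition~$\pi$ equals $Q/\mathord\approx$.

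The main obstacle I expect is the correct bookkeeping of the \emph{scaling map}~$f$ across a chain of merges. When $q'$ is merged into~$q$ with factor $f(q') = \frac{\wt(c_q[q'])}{\wt(c_q[q])}$, the scaling factors of \emph{all} states previously in $\pi(q')$ must be updated multiplicatively (the \texttt{FORALL $r \in \pi(q')$} loop), because their factors were recorded relative to the old representative~$q'$, which now itself carries a factor relative to~$q$. The delicate part is verifying that these composed factors remain consistent with the definition of almost-equivalence, i.e.\ that $\semf{r}_{\mathcal A} = f(r) \srtimes \semf{q}_{\mathcal A}$ on almost all contexts, where $f(r)$ is the final accumulated product. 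This requires tracing the scaling factor through the telescoping product of pairwise merge factors and invoking the associativity and commutativity of the semifield, together with the fact established in the proof of Lemma~\ref{lem.signature} that the factor~$s$ between signature-equal states does not depend on the transition context. Once this is verified, the theorem follows, and Proposition~\ref{lm:Time} already supplies the run-time bound.
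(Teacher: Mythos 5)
Your proposal is correct and follows essentially the same route as the paper's own proof sketch: soundness via Lemma~\ref{lem.signature}, completeness via (the contrapositive of) Lemma~\ref{lem.join}, continuation after each weighted merge via Lemma~\ref{lem.techmerge}, and correctness of the scaling map extracted from the context-independence of the factor in the proof of Lemma~\ref{lem.signature}. Your explicit loop invariant and the bookkeeping of composed scaling factors across merge chains merely flesh out details the paper leaves implicit.
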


\begin{proof}[Proof sketch]
  If there exist different, but almost-equivalent states,
  then there exist different states with the same standardized
  signature by Lemma~\ref{lem.join}.  Lemma~\ref{lem.signature} shows
  that such states are almost-equivalent.\linebreak[4]  Finally,
  Lemma~\ref{lem.techmerge} shows that we can continue the computation
  of the almost-equivalence after a weighted merge of such states.
  The correctness of the scaling map is shown implicitly in the
  proof of Lemma~\ref{lem.signature}.
\end{proof}

\begin{theorem}
  \label{thm:HMWTime}
  We can hyper-minimize \wdta\ in time~$\bigo(m \log n)$, where $m =
  \abs{\Sigma(Q)}$ and $n = \abs Q$.
\end{theorem}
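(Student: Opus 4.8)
The plan is to prove Theorem~\ref{thm:HMWTime} by assembling the complexity bounds already established for each phase of Algorithm~\ref{alg:Overall}. The theorem is essentially a summary claim: the overall running time is the sum of the costs of the individual lines, and each of those costs has been annotated in the algorithm and justified in the surrounding text. So the proof is primarily one of accounting, confirming that no single phase exceeds the target $\bigo(m \log n)$ and that correctness carries through the whole pipeline.

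First I would enumerate the five phases and cite their costs. Line~\ref{ln.o1} invokes $\textsc{Minimize}$, which by~\cite{malque11c} runs in time~$\bigo(m \log n)$ and produces a minimal \wdta; all subsequent steps rely on this minimality assumption, so it is important that this is the first operation. The computation of the kernel states~$K$ and the co-kernel states~$\overline K$ each takes time~$\bigo(m)$, using a strongly-connected-components algorithm such as \textsc{Tarjan}'s~\cite{tar72} together with the characterization from~\cite{jezmal12b} (a state is a kernel state iff it is reachable from a nontrivial strongly connected component or a self-loop), and the analogous characterization for co-kernel states. The computation of the almost-equivalence in line~\ref{ln.ae} is performed by Algorithm~\ref{alg:Almost}, which runs in time~$\bigo(m \log n)$ by Proposition~\ref{lm:Time}. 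Finally, the merging in Algorithm~\ref{alg.merge} runs in time~$\bigo(n) \subseteq \bigo(m)$ by Proposition~\ref{lm:Time2}'s underlying argument, since each of the $n$~states participates in at most one merge.

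Summing these contributions, the dominant terms are the three $\bigo(m \log n)$ phases, so the total running time is $\bigo(m \log n)$, as claimed. For correctness, I would invoke the preceding structural results: Theorem~\ref{thm.main} guarantees that Algorithm~\ref{alg:Almost} correctly computes the almost-equivalence~$\approx$ together with a valid scaling map; Theorem~\ref{thm:HMChar} characterizes hyper-minimality as the absence of different, almost-equivalent states of which at least one is a preamble state; and Lemma~\ref{lem.merge} ensures that each weighted merge of a preamble state into an almost-equivalent state preserves almost-equivalence with the input. Since Algorithm~\ref{alg.merge} merges every preamble state in each $\approx$-block into a surviving representative (preferring a kernel state when one exists), the output contains no mergeable preamble/almost-equivalent pair, hence is hyper-minimal by Theorem~\ref{thm:HMChar}, and remains almost-equivalent to the input by repeated application of Lemma~\ref{lem.merge}.

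The main obstacle is not in the arithmetic of the time bound but in certifying that the two interleaved concerns---the $\log n$ amortization inside Algorithm~\ref{alg:Almost} and the preservation of minimality across the in-place merges performed on lines~\ref{ln.Comp}--\ref{ln.Merge}---actually compose cleanly. The subtle point is that Algorithm~\ref{alg:Almost} mutates~${\cal A}$ by merging as it goes, so one must verify (via Lemma~\ref{lem.techmerge}) that the almost-equivalence relation on the surviving states is unchanged by each merge, ensuring that signatures recomputed against the updated~$\delta$ remain meaningful. Given that this invariant is precisely what Lemma~\ref{lem.techmerge} and Theorem~\ref{thm.main} establish, the remaining work is routine, and the theorem follows by combining the cited bounds.
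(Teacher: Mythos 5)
Your proposal is correct and matches the paper's own (largely implicit) argument: the paper offers no separate proof of Theorem~\ref{thm:HMWTime}, treating it as the summary of the per-phase costs annotated in Algorithm~\ref{alg:Overall} together with Propositions \ref{lm:Time}~and~\ref{lm:Time2}, Theorem~\ref{thm.main}, Theorem~\ref{thm:HMChar}, and Lemma~\ref{lem.merge}, which is exactly the accounting you perform. Your added remark that Lemma~\ref{lem.techmerge} is what makes the in-place merges inside Algorithm~\ref{alg:Almost} compose with the $\log n$ amortization is a faithful reading of the paper's intent, not a deviation from it.
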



\bibliographystyle{eptcs}
\bibliography{extra}

\end{document}